\lstdefinelanguage{prog}
{
morekeywords={prob, if, then, else, fi, while, do, od, true, false, and, or, skip},
sensitive = false
}
\newcommand{\Rset}{\mathbb{R}}
\newcommand{\Nset}{\mathbb{N}}
\newcommand{\Zset}{\mathbb{Z}}
\newcommand{\val}[1]{\mbox{\sl Val}_{#1}}
\newcommand{\sampdpd}{\overline{\Upsilon}}
\newcommand{\probm}{\Pr}
\newcommand{\expv}{\mathbb{E}}
\newcommand{\condexpv}[2]{{\expv}{\left({#1}{\mid}{#2}\right)}}
\newcommand{\loc}{\ell}
\newcommand{\supp}[1]{{\mathrm{supp}}{\left(#1\right)}}
\newtheorem{remark}{Remark}
\newcommand{\rd}[1]{{#1}}
\newcommand{\yican}[1]{ #1}
\renewcommand{\time}{\mathcal{T}}
\renewcommand{\paragraph}[1]{\smallskip\noindent\textbf{\emph{#1}}.}
\newcommand{\problem}[1]{{\textsc{#1}}}
\begin{document}
\title[Concentration-Bounds for Probabilistic Programs and Recurrences]{Concentration-Bound Analysis for Probabilistic Programs and Probabilistic Recurrence Relations}



\author{Jinyi Wang}
\authornote{equal contribution}
\affiliation{
  \institution{Shanghai Jiao Tong University}            
  \city{Shanghai}
  \country{China}
}
\email{jinyi.wang@sjtu.edu.cn}

\author{Yican Sun}
\affiliation{
  \institution{Peking University}            
  \city{Beijing}
  \country{China}
}
\email{sycpku@pku.edu.cn}
\authornotemark[1]

\author{Hongfei Fu}
\affiliation{
  \institution{Shanghai Jiao Tong University}            
  \city{Shanghai}
  \country{China}
}

\email{fuhf@cs.sjtu.edu.cn}

\author{Mingzhang Huang}
\affiliation{
  \institution{IST Austria (Institute of Science and Technology Austria)}            
  \city{Klosterneuburg}
  \country{Austria}
}
\email{mingzhang.huang@ist.ac.at}

\author{Amir Kafshdar Goharshady}
\affiliation{
  \institution{IST Austria}            
  \city{Klosterneuburg}
  \country{Austria}
}
\email{goharshady@gmail.com}

\author{Krishnendu Chatterjee}
\affiliation{
  \institution{IST Austria}            
  \city{Klosterneuburg}
  \country{Austria}
}
\email{krishnendu.chatterjee@ist.ac.at}

\renewcommand{\shortauthors}{Wang and Sun, et al.}


\begin{abstract}
	
	Analyzing probabilistic programs and randomized algorithms are classical 
	problems in computer science.
	The first basic problem in the analysis of stochastic processes is to consider 
	the expectation or mean, and another basic problem is to consider 
	concentration bounds, i.e.~showing that large deviations from the mean have small probability.
	Similarly, in the context of probabilistic programs and randomized algorithms, the analysis 
	of expected termination time/running time and their concentration bounds are 
	fundamental problems.
	In this work, we focus on concentration bounds for probabilistic programs and 
	probabilistic recurrences of randomized algorithms.
	For probabilistic programs, the basic technique to achieve concentration bounds is to
	consider martingales and apply the classical Azuma's inequality~\cite{azuma1967weighted}.
	For probabilistic recurrences of randomized algorithms, Karp's classical 
	``cookbook'' method~\cite{DBLP:journals/jacm/Karp94}, which is similar to the master theorem for recurrences, is the 
	standard approach to obtain concentration bounds. 
	In this work, we propose a novel approach for deriving concentration bounds 
	for probabilistic programs and probabilistic recurrence relations through the
	synthesis of exponential supermartingales. 
	For probabilistic programs, we present algorithms for synthesis of such 
	supermartingales in several cases. We also show that our approach can derive 
	better concentration bounds than simply applying the classical
	Azuma's inequality over various probabilistic programs considered in the 
	literature.
	For probabilistic recurrences, our approach can derive tighter bounds than 
	the well-established methods of \cite{DBLP:journals/jacm/Karp94} on classical algorithms such as quick sort, quick select, and randomized diameter computation. 
	Moreover, we show that our approach could derive bounds comparable to the optimal bound for quicksort, proposed in ~\cite{DBLP:journals/jal/McDiarmidH96}. 
	We also present a prototype implementation that can automatically infer 
	\yican{these} bounds. 

\end{abstract}

\begin{CCSXML}
<ccs2012>
<concept>
<concept_id>10011007.10011006.10011008</concept_id>
<concept_desc>Software and its engineering~General programming languages</concept_desc>
<concept_significance>500</concept_significance>
</concept>
<concept>
<concept_id>10003456.10003457.10003521.10003525</concept_id>
<concept_desc>Social and professional topics~History of programming languages</concept_desc>
<concept_significance>300</concept_significance>
</concept>
</ccs2012>
\end{CCSXML}

\ccsdesc[500]{Software and its engineering~General programming languages}
\ccsdesc[300]{Social and professional topics~History of programming languages}



\maketitle

\section{Introduction}
In this work, we present new methods to obtain concentration bounds for probabilistic
programs and probabilistic recurrences.
In this section, we start with a brief description of probabilistic programs and recurrences and provide an overview of the basic analysis problems. Then, we discuss previous results and finally present our contributions.

\paragraph{Probabilistic programs and recurrences}
The formal analysis of probabilistic models is a fundamental problem that
spans across various disciplines, including 
probability theory and statistics~\cite{Durrett,Howard,Kemeny,Rabin63,PazBook},
randomized algorithms~\cite{DBLP:books/cu/MotwaniR95},
formal methods~\cite{BaierBook,prism},
artificial intelligence~\cite{LearningSurvey,kaelbling1998planning},
and programming languages~\cite{handbook}.
The analysis of probabilistic programs, i.e.~imperative programs extended with 
random value generators, has received significant attention in the programming 
languages community~\cite{SriramCAV,HolgerPOPL,SumitPLDI,EGK12,DBLP:journals/toplas/ChatterjeeFNH18,DBLP:journals/pacmpl/BartheEGHS18,DBLP:conf/popl/BartheGHS17,DBLP:conf/esop/FosterKMR016,DBLP:conf/pldi/Cusumano-Towner18a,DBLP:conf/cav/ChatterjeeFG16,DBLP:journals/pacmpl/HarkKGK20,DBLP:conf/lics/OlmedoKKM16,DBLP:journals/pacmpl/DahlqvistK20}.
Similarly, the analysis of randomized algorithms is a central and classical problem in theoretical
computer science~\cite{DBLP:books/cu/MotwaniR95}, and a key problem is to analyze the probabilistic 
recurrence relations arising from randomized algorithms.

\paragraph{Basic analysis problems}
In the analysis of probabilistic programs and probabilistic recurrences, one has to analyze the underlying stochastic processes.
The first basic problem in analysis of stochastic processes is to consider 
the expectation or the mean~\cite{probabilitycambridge}.
However, the expectation (or the first moment) does not provide enough information about
the probability distribution associated with the process.
Hence higher moments (such as variance) are key parameters of interest.
Another fundamental problem is to obtain {\em concentration bounds} showing that large deviation from the mean 
has small probability.
A key advantage of establishing strong concentration bounds is that it enables us to provide high-probability guarantees (e.g.~with high probability the running time does not exceed a desired
bound).
In the context of probabilistic programs and randomized algorithms, a key quantity of interest is
the termination time of programs or the running time associated with probabilistic recurrences.
Thus, the analysis of expected termination time/running time and their concentration bounds 
for probabilistic programs and probabilistic recurrences are fundamental problems in computer science.

\paragraph{Previous results on expectation analysis}
The problem of expected termination time analysis has received huge attention.
The expected termination time analysis for probabilistic programs has been widely 
studied with different techniques, e.g.~martingale-based techniques~\cite{SriramCAV,DBLP:journals/corr/ChatterjeeF17} and
weakest pre-expectation calculus~\cite{DBLP:conf/esop/KaminskiKMO16}.
The analysis of expected running time of probabilistic recurrences is a fundamental
problem studied in randomized algorithms~\cite{DBLP:books/cu/MotwaniR95}, and automated methods for them
have also been considered~\cite{DBLP:conf/cav/ChatterjeeFM17}.

\paragraph{Previous results on concentration bounds}
The analysis of concentration bounds is more involved than expectation analysis, both for the 
general case of stochastic processes, as well as in the context of probabilistic programs
and probabilistic recurrences.
For probabilistic programs, the only technique in the literature to obtain concentration
bounds is to consider either linear or polynomial supermartingales~\cite{SriramCAV,DBLP:conf/cav/ChatterjeeFG16}, and then apply the 
classical Azuma's inequality~\cite{azuma1967weighted} on the supermartingales to obtain concentration bounds.
For probabilistic recurrences of randomized algorithms, the standard approach for concentration 
bounds is Karp's classical ``cookbook'' method~\cite{DBLP:journals/jacm/Karp94}, which is similar to the master theorem for 
non-probabilistic recurrences.
More advanced methods have also been developed for specific algorithms such as \problem{QuickSort}~\cite{DBLP:journals/jal/McDiarmidH96,DBLP:books/daglib/0025902}.

\paragraph{Our contributions}
In this work, we consider the problem of concentration bounds for probabilistic programs and probabilistic 
recurrences and our main contributions are as follows:
\begin{enumerate}

\item First, we propose a novel approach for deriving concentration bounds 
for probabilistic programs and probabilistic recurrence relations through the
synthesis of exponential supermartingales. 

\item For probabilistic programs, we present algorithms for the synthesis of such 
supermartingales for several cases.
We show that our approach can derive better concentration bounds than simply applying 
the classical Azuma's inequality, over various probabilistic programs considered in the 
literature.

\item For probabilistic recurrences, we show that our approach can derive tighter bounds 
than the \yican{classical methods} of the literature on several basic problems.
We show that our concentration bounds for probabilistic recurrences associated with classical
randomized algorithms such as \problem{QuickSelect} (which generalizes median selection), \problem{QuickSort} and
randomized diameter computation beat the bounds obtained using methods of~\cite{DBLP:journals/jacm/Karp94}.
Moreover, we show that our approach could derive bounds comparable to the optimal bound for quicksort, proposed in ~\cite{DBLP:journals/jal/McDiarmidH96}. 
We also present a prototype implementation that can automatically infer 
these bounds. 

\end{enumerate}

\paragraph{Novelty}
The key novelty of our approach is as follows:
Instead of linear or polynomial martingales, we consider exponential martingales,
and our concentration bounds are obtained using the standard Markov's inequality.
This is quite a simple technique. 
The surprising and novel aspect is that with such a simple technique, we can  
(a)~obtain better concentration bounds for probabilistic programs in comparison with the only existing method of applying Azuma's inequality; and 
(b)~improve the more than two-decades old classical bounds for basic and well-studied randomized algorithms.

\section{Preliminaries}~\label{sect:preliminaries}

Throughout the paper, we denote by $\Nset$, $\Nset_0$, $\Zset$, and $\Rset$ the sets of positive integers, non-negative integers, integers, and real numbers, respectively.
We first review several fundamental concepts in probability theory and then illustrate the problems to study.

\subsection{Basics of Probability Theory}\label{subsect:basics}

We provide a short review of some necessary concepts in probability theory. For a more detailed treatment, see~\cite{probabilitycambridge}.

\paragraph{Probability Distributions} A \emph{discrete probability distribution} over a countable set $U$ is a function $p:U\rightarrow[0,1]$ such that $\sum_{u\in U}p(u)=1$.
The \emph{support} of $p$ is defined as $\supp{p}:=\{u\in U\mid p(u)>0\}$.

\paragraph{Probability Spaces} A \emph{probability space} is a triple $(\Omega,\mathcal{F},\probm)$, where $\Omega$ is a non-empty set (called the \emph{sample space}), $\mathcal{F}$ is a \emph{$\sigma$-algebra} over $\Omega$ (i.e.~a collection of subsets of $\Omega$ that contains the empty set $\emptyset$ and is closed under complementation and countable union) and $\probm$ is a \emph{probability measure} on $\mathcal{F}$, i.e.~a function $\probm\colon \mathcal{F}\rightarrow[0,1]$ such that (i) $\probm(\Omega)=1$ and
(ii) for all set-sequences $A_1,A_2,\dots \in \mathcal{F}$ that are pairwise-disjoint
(i.e.~$A_i \cap A_j = \emptyset$ whenever $i\ne j$)
it holds that $\sum_{i=1}^{\infty}\probm(A_i)=\probm\left(\bigcup_{i=1}^{\infty} A_i\right)$.
Elements of $\mathcal{F}$ are called \emph{events}.
An event $A\in\mathcal{F}$ holds \emph{almost-surely} (a.s.) if $\probm(A)=1$.

\paragraph{Random Variables} A \emph{random variable} $X$ from a probability space $(\Omega,\mathcal{F},\probm)$
is an $\mathcal{F}$-measurable function $X\colon \Omega \rightarrow \Rset \cup \{-\infty,+\infty\}$, i.e.~a function such that for all $d\in \Rset \cup \{-\infty,+\infty\}$, the set $\{\omega\in \Omega\mid X(\omega)<d\}$ belongs to $\mathcal{F}$.

\paragraph{Expectation} The \emph{expected value} of a random variable $X$ from a probability space $(\Omega,\mathcal{F},\probm)$, denoted by $\expv(X)$, is defined as the Lebesgue integral of $X$ w.r.t. $\probm$, i.e.~$\expv(X):=\int X\,\mathrm{d}\probm$.
The precise definition of Lebesgue integral is somewhat technical and is
omitted  here (cf.~\cite[Chapter 5]{probabilitycambridge} for a formal definition).
If $\mbox{\sl range}~X=\{d_0,d_1,\ldots\}$ is countable, then we have
$\expv(X)=\sum_{k=0}^\infty d_k\cdot \probm(X=d_k)$.


\paragraph{Filtrations}
A \emph{filtration} of a probability space $(\Omega,\mathcal{F},\probm)$ is an infinite sequence $\{\mathcal{F}_n \}_{n\in\Nset_0}$ of $\sigma$-algebras over $\Omega$ such that $\mathcal{F}_n \subseteq \mathcal{F}_{n+1} \subseteq\mathcal{F}$ for all $n\in\Nset_0$. Intuitively, a filtration models the information available at any given point of time.

\paragraph{Conditional Expectation}
Let $X$ be any random variable from a probability space $(\Omega, \mathcal{F},\probm)$ such that $\expv(|X|)<\infty$.
Then, given any $\sigma$-algebra $\mathcal{G}\subseteq\mathcal{F}$, there exists a random variable (from $(\Omega, \mathcal{F},\probm)$), denoted by $\condexpv{X}{\mathcal{G}}$, such that:
\begin{compactitem}
\item[(E1)] $\condexpv{X}{\mathcal{G}}$ is $\mathcal{G}$-measurable, and
\item[(E2)] $\expv\left(\left|\condexpv{X}{\mathcal{G}}\right|\right)<\infty$, and
\item[(E3)] for all $A\in\mathcal{G}$, we have $\int_A \condexpv{X}{\mathcal{G}}\,\mathrm{d}\probm=\int_A {X}\,\mathrm{d}\probm$.
\end{compactitem}
The random variable $\condexpv{X}{\mathcal{G}}$ is called the \emph{conditional expectation} of $X$ given $\mathcal{G}$.
The random variable $\condexpv{X}{\mathcal{G}}$ is a.s. unique in the sense that if $Y$ is another random variable satisfying (E1)--(E3), then $\probm(Y=\condexpv{X}{\mathcal{G}})=1$.
We refer to ~\cite[Chapter~9]{probabilitycambridge} for details. Intuitively, $\condexpv{X}{\mathcal{G}}$ is the expectation of $X$, when assuming the information in $\mathcal{G}$.

\paragraph{Stochastic Processes}
A \emph{(discrete-time) stochastic process} is a sequence $\Gamma=\{X_n\}_{n\in\Nset_0}$ of random variables where $X_n$'s are all from some probability space $(\Omega,\mathcal{F},\probm)$.
The process $\Gamma$ is \emph{adapted to} a filtration $\{\mathcal{F}_n\}_{n\in\Nset_0}$ if for all $n\in\Nset_0$, $X_n$ is $\mathcal{F}_n$-measurable. Intuitively, the random variable $X_i$ models some value at the $i$-th step of the process.

%



\paragraph{Martingales and Supermartingales} A stochastic process $\Gamma=\{X_n\}_{n\in\Nset_0}$ adapted to a filtration $\{\mathcal{F}_n\}_{n\in\Nset_0}$ is a \emph{martingale} (resp. \emph{supermartingale})
if for every $n\in\Nset_0$, $\expv(|X_n|)<\infty$ and it holds a.s. that $\condexpv{X_{n+1}}{\mathcal{F}_n} = X_n$ (resp. $\condexpv{X_{n+1}}{\mathcal{F}_n}\le X_n$).
We refer to ~\cite[Chapter~10]{probabilitycambridge} for a deeper treatment.

Intuitively, a martingale (resp. supermartingale) is a discrete-time stochastic process in which for an observer who has seen the values of $X_0, \ldots, X_n$, the expected value at the next step, i.e.~$\condexpv{X_{n+1}}{\mathcal{F}_n}$, is equal to (resp. no more than) the last observed value $X_n$. Also, note that in a martingale, the observed values for $X_0, \ldots, X_{n-1}$ do not matter given that $\condexpv{X_{n+1}}{\mathcal{F}_n} = X_n.$ In contrast, in a supermartingale, the only requirement is that $\condexpv{X_{n+1}}{\mathcal{F}_n} \leq X_n$ and hence $\condexpv{X_{n+1}}{\mathcal{F}_n}$ may depend on $X_0, \ldots, X_{n-1}.$ Also, note that $\mathcal{F}_n$ might contain more information than just the observations of $X_i$'s.

\paragraph{Stopping Times} Given a probability space $(\Omega,\mathcal{F},\probm)$, a \emph{stopping time} w.r.t a filtration $\{\mathcal{F}_n\}_{n\in\Nset_0}$ is a random variable $T: \Omega\rightarrow \Nset_0\cup\{\infty\}$ such that $\{\omega\mid T(\omega)=n\}\in \mathcal{F}_n$.

\rd{
\begin{example}
	Consider an unbiased and discrete random walk, in which we start at a position $X_0$, and at each second walk one step to either left or right with equal probability. Let $X_n$ denote our position after $n$ seconds. It is easy to verify that $\expv\left(X_{n+1} \vert X_0, \ldots, X_n\right) = \frac{1}{2} (X_n - 1) + \frac{1}{2} (X_n + 1) = X_n.$ Hence, this random walk is a martingale. Note that by definition, every martingale is also a supermartingale.
	As another example, consider the classical gambler's ruin: a gambler starts with $Y_0$ dollars of money and bets continuously until he loses all of his money. If the bets are unfair, i.e.~the expected value of his money after a bet is less than its expected value before the bet, then the sequence $\{Y_n\}_{n \in \mathbb{N}_0}$ is a supermartingale. In this case, $Y_n$ is the gambler's total money after $n$ bets. On the other hand, if the bets are fair, then $\{Y_n\}_{n \in \mathbb{N}_0}$ is a martingale.
\end{example}
}

\subsection{Termination and Concentration Bounds}

In this work, we consider concentration bounds of the termination time of probabilistic programs and recurrence relations. Below, we illustrate the notions of termination time and concentration bound.

\paragraph{Termination Time} The termination-time random variable counts the number of steps a program takes to termination. In the setting of probabilistic programs, the termination-time random variable is defined as the first time the program hits its termination program counter, which is a stopping time.
For convenience, we treat each termination-time random variable as the total accumulated cost until the program terminates, for which each execution step of the program causes a single unit of cost.
In the setting of probabilistic recurrence relations, since the amount of steps in a preprocessing stage is directly added to the termination time, we treat the number of steps in a preprocessing stage as the cost at current execution step, and define the termination-time variable as the total accumulated cost until the recurrence relation runs into the empty stack (i.e., terminates).
In both cases, we consider the termination time random variable as the total accumulated cost until the stopping time of termination.

\paragraph{Concentration Bounds} A \emph{concentration bound} of a random variable $C$ is an inequality of the form $\probm(C\ge d)\le f(d)$ or $\probm(C\le d)\le f(d)$ which specifies whether the probability that the value of $C$ grows too large or too small is bounded by a function $f$, which tends to zero when $d\rightarrow\infty$ (for $\probm(C\ge d)$) or $d\rightarrow-\infty$ (for $\probm(C\le d)$).
Concentration bounds are an important probabilistic property of random variables as they witness that a random variable will have a substantially small probability of large deviation.
When applied to termination time, the concentration bound specifies the following property: the probability that a randomized algorithm runs inefficiently, i.e.~takes much longer than its expected termination time, is very small. Thus compared with expected termination time, concentration bounds provide a much finer characterization of the running time of randomized algorithms.

\paragraph{Problem Statement} In this work, we consider the following problems: For probabilistic programs, we consider unnested probabilistic while loops. Our aim is to have automated approaches for deriving much tighter exponentially-decreasing concentration bounds in comparison with the existing approach through Azuma or Hoeffding inequalities~\cite{DBLP:journals/toplas/ChatterjeeFNH18}. For probabilistic recurrence relations, our goal is to derive substantially finer bounds compared with the classical results of \cite{DBLP:journals/jacm/Karp94}, and for quicksort, we try to derive bounds comparable to the optimal result proposed in \cite{DBLP:journals/jal/McDiarmidH96}.
Obtaining more precise (tighter) concentration bounds is important in many applications, e.g.~resource-contention resolution~\cite[Chapter 13]{DBLP:books/daglib/0015106}, or in resource-constrained environments such as embedded systems. 

\section{The General Method}


In this section, we establish the mathematical foundation for the analysis of concentration bounds, based on which we will later develop novel sound approaches to derive concentration bounds for the termination time arising from probabilistic programs and recurrence relations.
As in certain situations, e.g. probabilistic recurrence relations, the termination time is defined as the total cost accumulated at every execution step, we consider concentration bounds for a general non-negative random variable.
First, we introduce the basic result through which we derive our concentration bounds. This lemma is similar to the basis for Chernoff bound.

\smallskip
\begin{lemma}~\label{lem:1}
For any real number $d$, random variable $C$ and $\beta>1$, we have $\probm(C\ge d)\le \expv(\beta^C)/\beta^d$.
\end{lemma}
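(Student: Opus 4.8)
The plan is to mimic the standard Chernoff-bound argument, which converts a tail bound on $C$ into an expectation bound on an exponential transform of $C$. Since $\beta > 1$, the function $x \mapsto \beta^x$ is strictly increasing, so the event $\{C \ge d\}$ coincides with the event $\{\beta^C \ge \beta^d\}$. This is the key observation that lets us replace $C$ by the non-negative random variable $\beta^C$.

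First I would note that $\beta^C$ is a non-negative random variable (as $\beta > 1 > 0$), so we may apply Markov's inequality to it: for any threshold $t > 0$, $\probm(\beta^C \ge t) \le \expv(\beta^C)/t$. Next I would instantiate this with $t = \beta^d$, which is positive. Then I would invoke the equivalence of events noted above, $\{C \ge d\} = \{\beta^C \ge \beta^d\}$, to rewrite $\probm(\beta^C \ge \beta^d)$ as $\probm(C \ge d)$, yielding $\probm(C \ge d) \le \expv(\beta^C)/\beta^d$, which is exactly the claimed inequality.

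There is essentially no hard part here; the lemma is elementary. The only mild subtlety worth a remark is that the inequality is trivially true and uninformative when $\expv(\beta^C) = +\infty$ (the right-hand side is then $+\infty$), and Markov's inequality is still valid in that degenerate case, so no integrability hypothesis on $\beta^C$ is needed for the statement to hold. If one wanted to avoid citing Markov's inequality as a black box, one could instead write $\probm(C \ge d) = \expv(\mathbf{1}_{C \ge d}) \le \expv\bigl(\beta^{C-d} \cdot \mathbf{1}_{C \ge d}\bigr) \le \expv(\beta^{C-d}) = \expv(\beta^C)/\beta^d$, where the first inequality uses $\beta^{C-d} \ge 1$ on the event $\{C \ge d\}$ and the second uses non-negativity of $\beta^{C-d}$ everywhere; this is the self-contained version of the same one-line argument.
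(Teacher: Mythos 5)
Your proof is correct and follows the same route as the paper: monotonicity of $x\mapsto\beta^x$ turns $\{C\ge d\}$ into $\{\beta^C\ge\beta^d\}$, and Markov's inequality applied to the non-negative variable $\beta^C$ finishes it. The additional remarks on the infinite-expectation case and the indicator-function derivation are fine but not needed.
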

\begin{proof}
Since $\beta>1$, we have $\probm(C\ge d)=\probm(\beta^C\ge \beta^d)$. Then by applying Markov's inequality, we obtain the desired result.
\end{proof}

Lemma~\ref{lem:1} provides a mathematical way to derive a concentration bound for any random variable $C$, but yet we do not know an upper bound for $\expv(\beta^C)$.
In the setting of Chernoff bounds, one often chooses to evaluate $\expv(\beta^C),$ e.g.~when the random variable $C$ is a finite sum of independent random variables.
In our setting, the situation is more complicated, because we consider $C$ to be the total accumulated cost until a stopping time\footnote{If the cost accumulated at each step is equal to one, then the total accumulated cost is equal to the running time.}.
We use supermartingales and Optional Stopping Theorem~\cite[Chapter 10.10]{probabilitycambridge} to derive an upper bound for $\expv(\beta^C)$, as is shown by the following proposition.
In the proposition below, $\{C_n\}_{n\in\Nset_0}$ is a stochastic process where each $C_n$ represents the amount of cost accumulated at the $n-$th step so that $C=\sum_{n=0}^{T-1} C_n,$
and $\{X_n\}_{n\in\Nset_0}$ is a key stochastic process for deriving an upper bound for $\expv(\beta^C)$.

\smallskip
\begin{proposition}~\label{lem:2}
Consider two stochastic processes $\{X_n\}_{n\in\Nset_0}$ and  $\{C_n\}_{n\in\Nset_0}$ adapted to a filtration $\{\mathcal{F}_n\}$ for which $X_0$ is a constant random variable, and a stopping time $T$ w.r.t the filtration $\{\mathcal{F}_n\}$.
Let $\alpha,\beta>1$ be real numbers.
If it holds that
\begin{compactitem}
\item $\mathrm{(C1):}$ $T$ is a.s. finite, i.e., $\probm(T<\infty)=1$, and
\item $\mathrm{(C2):}$ $X_T\ge K$ a.s. for some constant $K\le 0$, and
\item $\mathrm{(C3):}$ $\beta^{C_n}\cdot \condexpv{\alpha^{X_{n+1}}}{\mathcal{F}_n}\le \alpha^{X_n}$ a.s. for all $n\in\Nset_0$,
\end{compactitem}
then we have $\expv(\beta^{C})\le \alpha^{X_0-K}$ where $C:=\sum_{n=0}^{T-1} C_n$.
Here, the existence and (probabilistic) uniqueness of the conditional expectation $\condexpv{\alpha^{X_{n+1}}}{\mathcal{F}_n}$ is due to its non-negativity (see \cite[Proposition~3.1]{AgrawalCN18}).
\end{proposition}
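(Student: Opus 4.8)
The goal is to bound $\expv(\beta^C)$ where $C = \sum_{n=0}^{T-1} C_n$. The natural candidate for a supermartingale is the process $Y_n := \beta^{\sum_{k=0}^{n-1} C_k} \cdot \alpha^{X_n}$ (with the empty sum being $0$, so $Y_0 = \alpha^{X_0}$). Condition (C3) is exactly engineered so that $\{Y_n\}$ is a supermartingale: $\condexpv{Y_{n+1}}{\mathcal{F}_n} = \beta^{\sum_{k=0}^{n} C_k} \cdot \condexpv{\alpha^{X_{n+1}}}{\mathcal{F}_n} = \beta^{\sum_{k=0}^{n-1} C_k} \cdot \left(\beta^{C_n} \cdot \condexpv{\alpha^{X_{n+1}}}{\mathcal{F}_n}\right) \le \beta^{\sum_{k=0}^{n-1} C_k} \cdot \alpha^{X_n} = Y_n$. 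So the first step is to define $Y_n$ and verify it is a nonnegative supermartingale adapted to $\{\mathcal{F}_n\}$.

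**Applying Optional Stopping.**

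The second step is to invoke the Optional Stopping Theorem to conclude $\expv(Y_T) \le \expv(Y_0) = \alpha^{X_0}$. Here $T$ is a.s.\ finite by (C1), so $Y_T$ is well-defined a.s.\ (I'd want to argue $Y_T = \lim_{n\to\infty} Y_{n \wedge T}$ a.s., which holds since $Y_{n\wedge T}$ is eventually constant equal to $Y_T$ on $\{T < \infty\}$). The version of OST I would use is the one for nonnegative supermartingales: if $\{Y_n\}$ is a nonnegative supermartingale and $T$ is an a.s.\ finite stopping time, then $\expv(Y_T) \le \expv(Y_0)$. This follows from Fatou's lemma applied to $Y_{n\wedge T} \to Y_T$ together with the optional stopping inequality for bounded stopping times $\expv(Y_{n\wedge T}) \le \expv(Y_0)$. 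I expect this is the step requiring the most care, since the unbounded-stopping-time version of OST needs either uniform integrability or a domination/nonnegativity hypothesis, and nonnegativity is precisely what the paper flags (the footnote about $\alpha^{X_{n+1}}$ being nonnegative).

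**Closing the argument.**

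Finally, I would unwind the definition at time $T$: on $\{T < \infty\}$ we have $Y_T = \beta^{\sum_{k=0}^{T-1} C_k} \cdot \alpha^{X_T} = \beta^C \cdot \alpha^{X_T}$. By (C2), $X_T \ge K$ a.s., and since $\alpha > 1$ this gives $\alpha^{X_T} \ge \alpha^{K}$, hence $Y_T \ge \beta^C \cdot \alpha^K$ a.s. Combining with $\expv(Y_T) \le \alpha^{X_0}$ yields $\alpha^K \cdot \expv(\beta^C) \le \expv(Y_T) \le \alpha^{X_0}$, and dividing by $\alpha^K > 0$ gives $\expv(\beta^C) \le \alpha^{X_0 - K}$, as claimed. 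A minor point to check along the way is that all the relevant expectations are finite (or at least that the inequalities make sense in $[0,\infty]$), which again is handled by nonnegativity throughout.
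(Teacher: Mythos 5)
Your proposal is correct and follows essentially the same route as the paper: define $Y_n = \alpha^{X_n}\cdot\beta^{\sum_{j=0}^{n-1}C_j}$, use (C3) with the ``take out what is known'' property to show it is a nonnegative supermartingale, apply the Optional Stopping Theorem for nonnegative supermartingales with the a.s.\ finite stopping time from (C1), and finish with (C2) via $Y_T\ge\alpha^K\cdot\beta^C$. Your explicit Fatou-based justification of the optional-stopping step and the attention to integrability are exactly the points the paper also flags.
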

\begin{proof}
Define the stochastic process $\{Y_n\}_{n\ge 0}$ as $Y_n := \alpha^{X_{n}}\cdot \beta^{\sum_{j=0}^{n-1}C_j}$.
By definition, we have that $Y_n>0$ for all $n$. Note that although $Y_n$ may be non-integrable (due to its exponential construction), its conditional expectation $\condexpv{Y_n}{\mathcal{F}_n}$ still exists, following from $Y_n>0$ and~\cite[Proposition~3.1]{AgrawalCN18}.
By the condition (C3) and the ``take out what is known'' property of conditional expectation (see~\cite[Chapter 9.7]{probabilitycambridge}),
we have that $\condexpv{\alpha^{X_{n+1}}\cdot\beta^{\sum_{j=0}^{n} C_j}}{\mathcal{F}_n}\le \alpha^{X_n}\cdot
\beta^{\sum_{j=0}^{n-1} C_j}$.
(Note that although the integrability may not be ensured, but the proof on~\cite[Page 90]{probabilitycambridge} guarantees the case for non-negative random variables.)
Then we have that $\condexpv{Y_{n+1}}{\mathcal{F}_n} \le Y_n$ a.s., which means $\expv(Y_{n+1})\le \expv(Y_n)$ from the basic property of conditional expectation.
It follows from an easy induction on $n$ that $\expv(Y_n)\le \expv(Y_0)<\infty$ for all $n\ge 0$, thus the conditional expectation is also taken in the normal sense as each $Y_n$ is indeed integrable, and
$\{Y_n\}_{n\in\Nset_0}$ is then a supermartingale.
Moreover, the process $\{Y_n\}_{n\ge 0}$ is a non-negative supermartingale by definition.
Then by applying Optional Stopping Theorem~\cite[Chapter 10.10]{probabilitycambridge} and using (C1),
we obtain that $\expv(Y_T)\le \expv(Y_0)=\alpha^{X_0}$.
Now, from the condition (C2), we have $Y_T\ge \alpha^K \cdot \beta^C$ a.s.
It follows that $\expv(\beta^C)\le \alpha^{X_0-K}$.
\end{proof}


Proposition~\ref{lem:2} provides a way to bound  $\expv(\beta^C)$ by a stochastic process $\{X_n\}_{n\ge 0}$ and an auxiliary $\alpha>1$ if the conditions (C1)--(C3) are satisfied.
By combining Lemma~\ref{lem:1} and Proposition~\ref{lem:2}, we obtain the following main theorem of this section for concentration bounds of total accumulated cost until a stopping time.







\smallskip
\begin{theorem}~\label{thm:main}
Let $\{X_n\}_{n\in\Nset_0}$ and $\{C_n\}_{n\in\Nset_0}$ be stochastic processes adapted to a filtration $\{\mathcal{F}_n\}$ for which $X_0$ is a constant random variable, and $T$ be a stopping time  with respect to
the filtration $\{\mathcal{F}_n\}$.
Let $\alpha,\beta>1$ be real numbers.
If the conditions (C1)--(C3) (cf. Proposition~\ref{lem:2}) are fulfilled (by $X_n, C_n$'s, $\alpha,\beta$ and $T$), then we have that $\probm(C\geq  d)\leq \alpha^{X_0-K}\cdot \beta^{-d}$ for every $d\in \Rset$,
where  $C:=\sum_{n=0}^{T-1} C_n$.
\end{theorem}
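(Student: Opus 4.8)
The plan is to simply chain the two results that immediately precede the statement. Since the hypotheses of Theorem~\ref{thm:main} are verbatim the hypotheses of Proposition~\ref{lem:2} (the same processes $\{X_n\}$, $\{C_n\}$, the same filtration, the same stopping time $T$, the same constants $\alpha,\beta>1$, the same $K\le 0$, and the same conditions (C1)--(C3)), I can invoke Proposition~\ref{lem:2} directly to conclude that $\expv(\beta^{C})\le \alpha^{X_0-K}$, where $C=\sum_{n=0}^{T-1}C_n$.

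Next I would apply Lemma~\ref{lem:1} with this particular random variable $C$ and the same $\beta>1$: for every real $d$, Lemma~\ref{lem:1} gives $\probm(C\ge d)\le \expv(\beta^{C})/\beta^{d}$. Substituting the upper bound on $\expv(\beta^{C})$ obtained from Proposition~\ref{lem:2} yields $\probm(C\ge d)\le \alpha^{X_0-K}\cdot\beta^{-d}$, which is exactly the claimed inequality. Because $d$ was arbitrary in $\Rset$ and both Lemma~\ref{lem:1} and Proposition~\ref{lem:2} impose no restriction on its sign, the bound holds for all $d\in\Rset$, as stated.

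There is essentially no obstacle here: the theorem is a packaging statement whose entire content is the composition of Lemma~\ref{lem:1} and Proposition~\ref{lem:2}, so the only thing to check is that the hypotheses line up, which they do by construction. If anything deserves a remark, it is that one should note $\beta>1$ is needed both for Lemma~\ref{lem:1} (to pass from $\probm(C\ge d)$ to $\probm(\beta^{C}\ge\beta^{d})$) and, together with $\alpha>1$ and $K\le 0$, for the supermartingale argument inside Proposition~\ref{lem:2}; no further work is required.
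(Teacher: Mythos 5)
Your proof is correct and is exactly the paper's argument: invoke Proposition~\ref{lem:2} to bound $\expv(\beta^{C})$ by $\alpha^{X_0-K}$, then apply Lemma~\ref{lem:1} and combine. Nothing further is needed.
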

\begin{proof}
By Lemma~\ref{lem:1}, we have $\probm(C\geq d)\le \expv(\beta^C)/\beta^{d}$.
By Proposition~\ref{lem:2}, we have $\expv(\beta^C)\le \alpha^{X_0-K}$.
Thus, we have $\probm(C\geq  d)\le \alpha^{X_0-K}\cdot \beta^{-d}$.
\end{proof}

\begin{remark}[Comparison with Classical Methods]
Compared with classical mathematical methods such as Chernoff bounds and Hoeffding's inequality that only examine the expected value, variance or range of related random variables, our method (Theorem~\ref{thm:main}) examines the detailed probability distribution by deriving a better bound for
the moment generation function  $\expv(\beta^C)$. The derivation is through the construction of an exponential supermartingale (see Proposition~\ref{lem:2}), and depends on the detailed probability distributions in the considered problem.
Since we examine the probability distributions in detail,
our method can obtain much tighter concentration bound than the inequalities of Azuma and Hoeffding~\cite{DBLP:journals/toplas/ChatterjeeFNH18} and Karp's classical method~\cite{DBLP:journals/jacm/Karp94}. We do not consider Chernoff bounds as they are typically used for a finite sum of independent random variables, while our method focuses on concentration bounds for total accumulated costs until a stopping time.
We show how the method can be automated for several classical and widely-used probability distributions, such as uniform and Bernoulli. 
\end{remark}

\section{Concentration Bounds for Probabilistic Programs}

In this section, we apply our method (Theorem~\ref{thm:main}) to probabilistic programs. Probabilistic programs are imperative programs extended with random number generators. Here we consider the concentration bound of a single probabilistic while loop in which there is no nested loops.
We first illustrate how we can apply our method to a single probabilistic while loop.
Then we develop automated algorithms for deriving concentration bound.
Finally, we present experimental results and compare our results with previous methods.

\subsection{The General Method Applied to Probabilistic While Loops}\label{subsec:syntax_and_semantics} 

We consider a probabilistic while loop of the form
\begin{equation}\label{eq:probprogramform}
\textbf{while}~G~\textbf{do}~Q~\textbf{od}
\end{equation}
where $G$ is the loop guard and $Q$ an imperative probabilistic program with possibly assignment statements with samplings, conditional branches, probabilistic branches, but without (nested) while loops.
For a detailed treatment of the syntax, we refer to~\cite{DBLP:journals/toplas/ChatterjeeFNH18}.
Since we only consider single probabilistic while loops here, we choose to have a light-weight description of probabilistic programs.

Before we describe probabilistic programs, we first illustrate a simple example.


\begin{example}\label{ex:simpleprog}
	Consider the probabilistic while loop:
	\lstset{language=prog}
	\begin{lstlisting}[mathescape]
	while($x\geq 0$)do $x:=x+r$ od
	\end{lstlisting}
	In each loop iteration, the value of the variable $x$ is added a random sampling $r$ whose distribution is given by $\probm(r=-1)=0.75, \probm(r=1)=0.25$.
	The random value of $r$ is sampled independently in every iteration.
	The loop ends when the value of $x$ falls below zero.
\end{example}

Below we describe the behaviours of a probabilistic while loop. In the sequel, we fix two disjoint sets of variables: the set $V_p$ of \emph{program variables} and the set $V_r$ of \emph{sampling variables}.
Informally, program variables are real-valued and are directly related to the control flow of a program, while sampling variables represent random inputs sampled from their predefined probability distributions.

\smallskip
\noindent{\em Valuations.} A \emph{valuation} over a finite set $V$ of variables is a function $\nu : V \rightarrow \Rset$ that assigns a real value to each variable.
We denote by $\val{V}$ the set of all valuations over $V$.
For simplicity, we treat each loop guard $G$ of a probabilistic while loop as a subset of $\val{V_{\mathrm{p}}}$ so that a valuation satisfies the loop guard iff the valuation falls in the designated subset $G$.
Moreover, we may also  regard each element $\nu
in \val{{V_{\mathrm{p}}}}$ (resp. $\mu\in \val{V_{\mathrm{r}}}$) as a vector $\nu$ (resp. $\mu$) with an implicit ordering between program variables (resp. sampling variables),
such that $\nu[i]$ (resp. $\mu[i]$) represents the value of the $i$th program variable (resp. sampling variable), respectively.

\smallskip
\noindent{\em The Semantics.} We present a lightweight semantics for probabilistic while loops.
We describe the execution of a probabilistic while loop by an infinite sequence of vectors of random variables $\{\nu_n\}_{n\geq 0}$
inductively defined as follows. Below we fix a probabilistic while loop in the form (\ref{eq:probprogramform}).

\begin{compactitem}
	\item  $\nu_0$ is a vector of constant random variables representing the initial input;
	\item  if $\nu\in G$ (i.e., $\nu$ satisfies the loop guard), then $\nu_{n+1} = F(\nu_n, \mu_n)$, where (i) the vector $\mu_n$ represents the sampled values for sampling variables at the $(n+1)$th loop iteration,
	and (ii) $F$ is the \emph{update function} for the loop body such that given the current valuation  $\nu\in\val{{V_{\mathrm{p}}}}$ for the program variables before the current loop iteration and the values
	$\mu\in\val{{V_{\mathrm{r}}}}$ sampled in the current loop iteration, $F(\nu, \mu)$ is the updated vector for program variables after the execution of the current loop iteration;
	\item if $\nu\not\in G$ (i.e., $\nu$ does not satisfy the loop guard), then  $\nu_{n+1} = \nu_n$.
\end{compactitem}
\noindent The inductive definition can be turned into a general state-space Markov chain by defining suitable kernel functions for the update function $F$, see~\cite[Chatper 3]{gssmc}.

In the paper, we consider the simplified setting that the sampled values are discrete with bounded range.
For this simplified setting, we use a \emph{sampling function} $\Upsilon$ to describe the sampling, which assigns to every sampling variable $r\in {V_{\mathrm{r}}}$ the predefined discrete probability distribution over $\Rset$.
Then, the \emph{joint} discrete probability distribution $\sampdpd$ over $\val{{V_{\mathrm{r}}}}$ is defined as
$\sampdpd(\mu):=\prod_{r\in{V_{\mathrm{r}}}} \Upsilon(r)(\mu(r))$ for all valuations $\mu$ over sampling variables.

Below we apply our method (Theorem~\ref{thm:main}) to probabilistic while loops, where we consider that the cost $C_n$ to execute the $(n+1)$th loop iteration is equal to $1$,
so that the total accumulated cost is equal to the number of loop iterations of the loop.
Recall that $F$ is the update function for the loop body.
\smallskip
\begin{theorem}[Concentration for Probabilistic Programs]~\label{thm:sound4}
	Let $P$ be a probabilistic while loop in the form~(\ref{eq:probprogramform}) and $T$ be the random variable representing the number of loop iterations of $P$ until termination.
	Suppose that (i) $T$ is a.s. finite (i.e., $P$ terminates with probability one) and (ii) there exist real numbers $\alpha,\beta>1, K\le 0$ and a Borel measurable function $\eta:\val{V_\mathrm{p}}\rightarrow \Rset$ such that
	\begin{compactitem}
		\item $\mathrm{(A1):}$ $\eta(\nu)\geq K$ for all $\nu\in G$;
		\item $\mathrm{(A2):}$ $\eta(\nu,\mu)\geq K$ for all $\nu\in G$ and $\mu\in V_{\mathrm{r}}$;
		\item $\mathrm{(A3):}$ $\beta\cdot \sum_{\mu\in\val{V_{\mathrm{r}}}}\sampdpd(\mu)\alpha^{\eta(F(\nu,\mu))}\leq \alpha^{\eta(\nu)}$  for every $\nu\in G$.
	\end{compactitem}
	Then for any initial valuation $\nu_0\in G$,
	we have  $\probm(T\geq n)\leq \alpha^{\eta(\nu_0)-K}\cdot\beta^{-n}$ for all $n\geq 0$.
\end{theorem}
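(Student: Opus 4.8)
\emph{Proof proposal.} The plan is to obtain Theorem~\ref{thm:sound4} as a direct instance of the general concentration result Theorem~\ref{thm:main}. I would first fix the filtration $\mathcal{F}_n := \sigma(\mu_0,\dots,\mu_{n-1})$ generated by the sampled vectors (with $\mathcal{F}_0$ trivial, which is harmless since $\nu_0$ consists of constant random variables), and take $T := \inf\{n\in\Nset_0 \mid \nu_n\notin G\}$ as the stopping time; by the lightweight semantics each $\nu_n$ is $\mathcal{F}_n$-measurable, $T$ is indeed a stopping time, and $T$ coincides with the number of loop iterations until termination. I would then set $X_n := \eta(\nu_n)$ and, crucially, $C_n := \mathbf{1}[T>n]$, so that $C := \sum_{n=0}^{T-1} C_n = T$. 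Since $\eta$ is Borel measurable, $\{X_n\}$ is adapted; since $\{T>n\}$ is $\mathcal{F}_n$-measurable, $\{C_n\}$ is adapted; and $X_0 = \eta(\nu_0)$ is a constant random variable. It then remains to verify the hypotheses (C1)--(C3) of Proposition~\ref{lem:2} for these processes, together with $\alpha,\beta,K$.

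Condition (C1) is precisely hypothesis (i). For (C2), I would note that $\nu_T$ is either $\nu_0$ itself (which lies in $G$, so $\eta(\nu_T)\ge K$ by (A1)) or of the form $F(\nu_{T-1},\mu_{T-1})$ with $\nu_{T-1}\in G$ (so $\eta(\nu_T)\ge K$ by (A2)); in both cases $X_T\ge K$ a.s. The heart of the argument is (C3), which I would verify by splitting on the $\mathcal{F}_n$-measurable event $\{T>n\}$. On the complement $\{T\le n\}$ the semantics freeze the valuation, $\nu_{n+1}=\nu_n$, and $C_n=0$, so $\beta^{C_n}\condexpv{\alpha^{X_{n+1}}}{\mathcal{F}_n}=\alpha^{X_n}$ and (C3) holds with equality. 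On $\{T>n\}$ we have $\nu_n\in G$, $C_n=1$, and $\nu_{n+1}=F(\nu_n,\mu_n)$; because the fresh sample $\mu_n$ is independent of $\mathcal{F}_n$ while $\nu_n$ is $\mathcal{F}_n$-measurable, $\condexpv{\alpha^{\eta(F(\nu_n,\mu_n))}}{\mathcal{F}_n}=\sum_{\mu\in\val{V_{\mathrm{r}}}}\sampdpd(\mu)\,\alpha^{\eta(F(\nu_n,\mu))}$, and then (A3), applicable since $\nu_n\in G$, gives $\beta^{C_n}\condexpv{\alpha^{X_{n+1}}}{\mathcal{F}_n}=\beta\sum_{\mu}\sampdpd(\mu)\,\alpha^{\eta(F(\nu_n,\mu))}\le\alpha^{\eta(\nu_n)}=\alpha^{X_n}$. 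Combining the two cases, (C3) holds a.s. for every $n$.

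With (C1)--(C3) in hand, Proposition~\ref{lem:2} yields $\expv(\beta^C)\le\alpha^{X_0-K}$, and Theorem~\ref{thm:main} then gives $\probm(C\ge d)\le\alpha^{X_0-K}\cdot\beta^{-d}$ for all $d\in\Rset$; instantiating $C=T$, $X_0=\eta(\nu_0)$, and $d=n$ delivers the claimed bound (and when $\nu_0\notin G$ the statement is trivial, since then $T=0$). I expect the main obstacle to be the correct treatment of the post-termination tail: one must use $C_n=0$ rather than $C_n=1$ once $T$ has been reached, as otherwise (C3) would degenerate into the contradiction $\beta\le 1$; and one must justify that on $\{T>n\}$ the conditional expectation collapses to the finite sum over $\val{V_{\mathrm{r}}}$ weighted by $\sampdpd$, which is exactly where the independence of $\mu_n$ from $\mathcal{F}_n$ enters. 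The remaining points — measurability of $\eta\circ F$ and well-definedness of the conditional expectations via non-negativity — are routine and already covered in the statement and proof of Proposition~\ref{lem:2}.
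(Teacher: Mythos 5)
Your proposal is correct and follows the same overall strategy as the paper's proof: define $X_n:=\eta(\nu_n)$ over a filtration generated by the program's history, verify (C1)--(C3), and invoke Proposition~\ref{lem:2} via Theorem~\ref{thm:main}. The one place where you genuinely depart from the paper is the definition of the cost process, and your version is the more careful one. The paper simply sets $C_n=1$ for all $n$; with that choice, on the event $\{T\le n\}$ (which has positive probability for large $n$ since $T<\infty$ a.s.) the semantics freeze the valuation, so $\condexpv{\alpha^{X_{n+1}}}{\mathcal{F}_n}=\alpha^{X_n}$ and (C3) would force $\beta\le 1$ --- i.e., (C3) as stated in Proposition~\ref{lem:2} fails after termination, and one would have to repair the argument by working with the stopped process $Y_{n\wedge T}$. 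Your choice $C_n=\mathbf{1}[T>n]$ leaves $C=\sum_{n=0}^{T-1}C_n=T$ unchanged but makes (C3) hold for every $n$ (with equality on $\{T\le n\}$, and via (A3) together with the independence of the fresh sample $\mu_n$ from $\mathcal{F}_n$ on $\{T>n\}$), so Proposition~\ref{lem:2} applies verbatim; you correctly identify this as the crux. The remaining steps --- adaptedness, the stopping-time property, the collapse of the conditional expectation to $\sum_{\mu}\sampdpd(\mu)\,\alpha^{\eta(F(\nu_n,\mu))}$, and the (C2) check via (A1)/(A2) --- match the paper's terser proof. One cosmetic remark: under the hypothesis $\nu_0\in G$ we have $T\ge 1$ a.s., so the ``$\nu_T=\nu_0$'' branch of your (C2) verification never occurs (and as phrased it would be inconsistent if it did, since $T=0$ would mean $\nu_0\notin G$); only the (A2) branch is actually needed.
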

\begin{proof}
	Choose $C_n=1$ for every $n\ge 0$. Then $T=C=\sum_{n=0}^{T-1} C_n$.
	Let $\nu_n$ be the vector of random variables representing the valuation for program variables at the $n$th step (where $\nu_0$ is a vector of constant random variables that represents the initial input).
	Define the filtration $\{\mathcal{F}_n\}_{n\ge 0}$ for which $\mathcal{F}_n$ is the smallest sigma-algebra that makes all random variables in $\nu_0, \dots, \nu_n$ measurable.
	Define the stochastic process $X_n:=\eta(\nu_n)$ for $n\ge 0$.
	From the first two conditions, we have that $X_n\ge K$ for all $n$.
	Note that the second condition specifies that the value of the process is no less than $K$ at the point the loop terminates.
	Then from the third condition and the ``role of independence'' property on \cite[Page 90]{williams1991probability}, we have that
	the condition (C3) holds.
	Thus, by Theorem~\ref{thm:main}, we have $\probm(T\geq n)\leq \alpha^{X_0-K}/\beta^n$.
\end{proof}

\subsection{A Synthesis Algorithm for Exponential Supermartingales}
\label{prevsection}

Reasoning about exponential terms is an inherently challenge for algorithmic analysis. Here we provide a practical synthesis algorithm for exponential supermartingales in Theorem~\ref{thm:sound4} through ranking-supermartingale (RSM) synthesis. Our heuristics is that we treat the function $\eta$ as an RSM-map, which is the core in existing RSM-synthesis algorithms.
Then based on the synthesized RSM-map $\eta$, we resolve $\beta,\alpha$ through polynomial solvers such as MATLAB.

\noindent{\em The Loops We Consider.} We consider that every assignment in the loop is \emph{incremental}, i.e.,~(i)~every assignment has the form $x:=x+e$ where $x$ is a program variable and $e$ is a linear expression consisting of constants and sampling variables.
Although incremental assignments are in limited form, they are expressive enough to implement random walks, gambler's ruin and many other examples in the literature.
Since we utilize the synthesize linear RSMs, we also assume that the loop guard $G$ can be expressed as a polyhedron (i.e., a conjunction of linear inequalities).
Also recall that we only handle samplings that observes discrete probability distributions with bounded range.

Before we illustrate the synthesis algorithm, we first recall the notion of RSM-maps, where we adopt the simplified version for a single probabilistic loop.
Below, we fix an input probabilistic while loop $P$ in the form~(\ref{eq:probprogramform}). Recall that $F$ is the update function.

\begin{definition}[RSM-maps]~\label{def:RSMmaps}
	An \emph{RSM-maps} is a Borel measurable function $\eta:\val{V_\mathrm{p}}\rightarrow \Rset$ such that there exist constants $\epsilon>0, K,K'\le 0$ such that
	\begin{compactitem}
		\item $\mathrm{(B1):}$ $\eta(\nu)\geq 0$ for all $\nu\in G$;
		\item $\mathrm{(B2):}$ $K\le \eta(\nu,\mu)\leq K'$ for all $\nu\in G$ and $\mu\in V_{\mathrm{r}}$;
		\item $\mathrm{(B3):}$ $\sum_{\mu\in\val{V_{\mathrm{r}}}}\sampdpd(\mu) \eta(F(\nu,\mu))\leq \eta(\nu)-\epsilon$  for every $\nu\in G$.
	\end{compactitem}
\end{definition}

Our algorithm fixes the $\epsilon$ to be $1$ in the definition as the function $\eta$ can be scaled by any factor.
Moreover, in our algorithm, the constant $K$ will correspond to the same $K$ in (A1) and (A2).
We do not use $K'$ in our algorithm.

\smallskip
\noindent{\em The Synthesis Algorithm.} The algorithm first synthesizes a linear RSM-map $\eta$ through existing algorithms(which is not the focus of this paper).
Then the algorithm searches the value for $\beta$ through a binary search. Once a value for $\beta$ is chosen, we solve the minimum value for $\alpha$ (so as to obtain the best bound from Theorem~\ref{thm:main}) through the MATLAB polynomial solver. A key point here is that since we only consider incremental assignments and the RSM-map $\eta$ is linear, the condition (A3) reduces to a polynomial inequality that only involves $\alpha$ when the value for $\beta$ is chosen. The algorithm thus proceeds by searching larger and larger $\beta$ through binary search until a given precision is reached, and then finds the corresponding minimal $\alpha$.
Then we apply Theorem~\ref{thm:sound4} to obtain the concentration bound.

\subsection{Prototype Implementation and Experimental Results}\label{result}

\paragraph{Implementation} We implemented the algorithm of Section \ref{prevsection} using Matlab. All results were obtained on a machine with an Intel Core i7-8700 processor (3.2 GHz) and 16 GB of memory, running Microsoft Windows 10.

\paragraph{Benchmarks} We use benchmarks from~\cite{DBLP:journals/toplas/ChatterjeeFNH18,pldi18,ijcai18}. Our benchmarks are as follows:
\begin{itemize}
	\item \problem{AdvRW1D}: This program models a 1-dimensional adversarial random walk and is taken from~\cite{DBLP:journals/toplas/ChatterjeeFNH18}.
	\item \problem{rdwalk1, rdwalk2, rdwalk3}: These programs model skewed 1-dimensional random walks, in which the probabilities of going to the left and to the right at each step are not equal. They are taken from~\cite{pldi18}. In \problem{rdwalk1} there is a $0.75$ probability of moving left and a $0.25$ probability of moving right. In \problem{rdwalk2} the probabilities are $0.875, 0.125$, and in \problem{rdwalk3}, they are $0.9375,0.0625$.
	\item \problem{prspeed}: This example is also a random walk taken from~\cite{pldi18}. The main difference is that the step length is not necessarily $1$.
	\item \problem{mini-Rou, mini-Rou2}: These programs are taken from~\cite{ijcai18}. They are implementations of the mini-roulette casino game as probabilistic programs. \problem{mini-Rou} is the benchmark in~\cite{ijcai18}, while~\problem{mini-Rou2} is a variant where the probabilities of losing gambles are increased in order to reduce the runtime.
	
\end{itemize}

\newcolumntype{H}{>{\setbox0=\hbox\bgroup}c<{\egroup}@{}}

\begin{table}[!h]
	\centering
	\caption{Our Experimental Results over Probabilistic Programs.}
	\label{table:4res}
	\scalebox{0.80}{
		\begin{tabular}{|c|c|c|c|c|c|Hc|c|}
			\hline
			\textbf{Program} & $ \alpha $ & $ \beta $ & $ \kappa $ & \makecell{\textbf{Our bound for}\\$ P(T>\kappa) $} & \makecell{\textbf{\cite{DBLP:journals/toplas/ChatterjeeFNH18}'s}\\\textbf{bound for} $P(T>\kappa)$} &  $ \beta _{\mbox{Hoeffding}}$  & \textbf{RSM} ($\eta$) & \makecell{\textbf{Our} \\\textbf{runtime (s)}} \\
			\hline \hline
			
			\multirow{5}{*}{vad1D } & \multirow{5}{*}{1.1915 } & \multirow{5}{*}{1.1003}
			&  $ 6X_0 $ & $ 2.25 \times 10^{-1} $ & $ 8.16 \times 10^{-1} $ & \multirow{5}{*}{1.027}  & \multirow{5}{*}{$2.8571x $}& \multirow{5}{*}{4.68} \\
			\cline{4-6}
			& &  &  $ 10X_0 $ & $ 3.33 \times 10^{-2} $ & $ 5.12 \times 10^{-1} $ & & &  \\
			\cline{4-6}
			& &  &  $ 14X_0 $ & $ 4.90 \times 10^{-3} $ & $ 3.07 \times 10^{-1} $ & & &  \\
			\cline{4-6}
			& &  &  $ 18X_0 $ & $ 7.28 \times 10^{-4} $ & $ 1.81 \times 10^{-1} $ & & &  \\
			\cline{4-6}
			&  & &   $ 22X_0 $ & $ 1.08 \times 10^{-4} $ & $ 1.06 \times 10^{-1} $ & & &  \\
			\hline
			
			\multirow{5}{*}{rdwalk1 } &\multirow{5}{*}{1.314 } & \multirow{5}{*}{1.1547 }
			&  $ 6X_0 $ & $ 9.07 \times 10^{-2} $ & $ 2.10 \times 10^{-1} $  & \multirow{5}{*}{1.133} & \multirow{5}{*}{$ 2x  $}& \multirow{5}{*}{3.82} \\
			\cline{4-6}
			& &  &  $ 10X_0 $ & $ 5.11 \times 10^{-3} $ & $ 2.06 \times 10^{-2} $ & & &  \\
			\cline{4-6}
			& &  &  $ 14X_0 $ & $ 2.88 \times 10^{-4} $ & $ 1.82 \times 10^{-3} $ & & &  \\
			\cline{4-6}
			&  & &  $ 18X_0 $ & $ 1.62 \times 10^{-6} $ & $ 1.56 \times 10^{-4} $ & & &  \\
			\cline{4-6}
			&  & &  $ 22X_0 $ & $ 9.13 \times 10^{-7} $ & $ 1.31 \times 10^{-5} $ & & &  \\
			\hline
			
			\multirow{5}{*}{rdwalk2 } &		\multirow{5}{*}{2.072 } & \multirow{5}{*}{1.511 }
			&  $ 6X_0 $ & $ 4.16 \times 10^{-4} $ & $ 7.92 \times 10^{-3} $ & \multirow{5}{*}{1.324} & \multirow{5}{*}{$\frac{3}{4}x  $}& \multirow{5}{*}{4.34} \\
			\cline{4-6}
			&	&  &  $ 10X_0 $ & $ 1.07 \times 10^{-7} $ & $ 3.41 \times 10^{-5} $ & & &  \\
			\cline{4-6}
			& &  &  $ 14X_0 $ & $ 2.75 \times 10^{-11} $ & $ 1.32 \times 10^{-7} $ & & &  \\
			\cline{4-6}
			& &  &  $ 18X_0 $ & $ 7.05 \times 10^{-15} $ & $ 4.97 \times 10^{-10} $ & & &  \\
			\cline{4-6}
			&  & &  $ 22X_0 $ & $ 1.81 \times 10^{-18} $ & $ 1.84 \times 10^{-12} $ & & &  \\
			\hline
			
			\multirow{5}{*}{rdwalk3 } &		\multirow{5}{*}{3.265 } & \multirow{5}{*}{2.065}
			&  $ 6X_0 $ & $ 5.07 \times 10^{-7} $ & $ 7.79 \times 10^{-4} $ & \multirow{5}{*}{1.466} & \multirow{5}{*}{$1.142x  $}& \multirow{5}{*}{5.74} \\
			\cline{4-6}
			& &  &  $ 10X_0 $ & $ 2.54 \times 10^{-13} $ & $ 4.39 \times 10^{-7} $ & & &  \\
			\cline{4-6}
			& &  &  $ 14X_0 $ & $ 1.27 \times 10^{-19} $ & $ 2.24 \times 10^{-10} $ & & &  \\
			\cline{4-6}
			&  & &  $ 18X_0 $ & $ 6.35 \times 10^{-26} $ & $ 1.10 \times 10^{-13} $ & & &  \\
			\cline{4-6}
			&  & & $ 22X_0 $ & $ 3.18 \times 10^{-32} $ & $ 5.35 \times 10^{-17} $ & & &  \\
			\hline
			
			\multirow{5}{*}{mini-Rou } &		\multirow{5}{*}{1.00124 } & \multirow{5}{*}{1.00068}
			&  $ 6X_0 $ & $ 1.00 \times 10^{-1} $ & $ 9.96 \times 10^{-1} $ & \multirow{5}{*}{1.000082} & \multirow{5}{*}{$13x-13  $}& \multirow{5}{*}{7.02} \\
			\cline{4-6}
			& & &  $ 10X_0 $ & $ 9.88 \times 10^{-1} $ & $ 9.99 \times 10^{-1} $ & & &  \\
			\cline{4-6}
			& & &  $ 14X_0 $ & $ 9.74 \times 10^{-1} $ & $ 1.00 \times 10^{-1} $ & & &  \\
			\cline{4-6}
			& & &  $ 18X_0 $ & $ 9.96 \times 10^{-1} $ & $ 9.99 \times 10^{-1} $ & & &  \\
			\cline{4-6}
			& & &  $ 22X_0 $ & $ 9.48 \times 10^{-1} $ & $ 9.98 \times 10^{-1} $ & & &  \\
			\hline
			
			\multirow{5}{*}{mini-Rou2 } &		\multirow{5}{*}{1.607 } & \multirow{5}{*}{1.309}
			&  $ 6X_0 $ & $ 3.90 \times 10^{-3} $ & $ 1.44 \times 10^{-1} $ & \multirow{5}{*}{1.077} & \multirow{5}{*}{$0.323x - 0.323  $}& \multirow{5}{*}{6.68} \\
			\cline{4-6}
			& & &  $ 10X_0 $ & $ 1.77 \times 10^{-5} $ & $ 3.26 \times 10^{-2} $ & & &  \\
			\cline{4-6}
			& & &  $ 14X_0 $ & $ 8.11 \times 10^{-8} $ & $ 7.30 \times 10^{-3} $ & & &  \\
			\cline{4-6}
			& &  &  $ 18X_0 $ & $ 3.71 \times 10^{-10} $ & $ 1.60 \times 10^{-3} $ & & &  \\
			\cline{4-6}
			& & &  $ 22X_0 $ & $ 1.69 \times 10^{-12} $ & $ 3.69 \times 10^{-4} $ & & &  \\
			\hline
			
			\multirow{5}{*}{prspeed } &		\multirow{5}{*}{2.0479 } & \multirow{5}{*}{1.3048}
			&  $ 6X_0 $ & $ 4.90 \times 10^{-1} $ & $ 2.52 \times 10^{-2} $ & \multirow{5}{*}{1.078} & \multirow{5}{*}{$1714-1.714x  $}& \multirow{5}{*}{4.58} \\
			\cline{4-6}
			&  & &  $ 10X_0 $ & $ 2.40 \times 10^{-4} $ & $ 6.00 \times 10^{-3} $ & & &  \\
			\cline{4-6}
			& & &  $ 14X_0 $ & $ 1.17 \times 10^{-5} $ & $ 1.37 \times 10^{-3} $ & & &  \\
			\cline{4-6}
			& & &  $ 18X_0 $ & $ 5.71 \times 10^{-8} $ & $ 3.07 \times 10^{-4} $ & & &  \\
			\cline{4-6}
			& & &  $ 22X_0 $ & $ 2.79 \times 10^{-10} $ & $ 6.84 \times 10^{-5} $ & & &  \\
			\hline
	\end{tabular}}
\label{tab:proproblem}
\end{table}

\paragraph{Results and Discussion} Our experimental results are presented in Table~\ref{table:4res}. We compare our concentration bounds with those obtained using Azuma's inequality and Hoeffding's inequality in~\cite{DBLP:journals/toplas/ChatterjeeFNH18}. As shown in Table~\ref{table:4res} our approach successfully obtains tighter bounds in every case. Moreover, note that our approach is very efficient and can obtain these bounds in a few seconds.

\section{Concentration bounds for probabilistic recurrences}\label{mysect:probrec}


\subsection{Problem Setting and Examples}

\paragraph{PRRs} In this section, we consider general Probabilistic Recurrence Relations (PRRs) as defined in~\cite{DBLP:journals/jacm/Karp94}. A PRR is a relation of the following form:
$$
\begin{matrix}\forall n >1,~~\time(n) = a(n) + \sum_{i=1}^\ell \time(h_i(n)) & & &\time(1) = 0,
\end{matrix}
$$
in which $a(n)$ is a non-negative value and each $h_i(n)$ is a non-negative random variable such that we always have $\sum_{i=1}^\ell h_i(n) \leq n-1$. Intuitively, we think of $\time(n)$ as the time it takes for a randomized divide-and-conquer algorithm to solve an instance of size $n$. The algorithm first performs a preprocessing procedure, which leads to $\loc$ smaller instances of random sizes $h_1(n), \ldots, h_\loc(n).$ It then solves the smaller instances recursively and merges the solutions in a postprocessing phase. We assume that the preprocessing and postprocessing on an instance of size $n$ take $a(n)$ units of time overall. Our goal is to obtain upper bounds on the tail probability $\Pr\left[\time(n^*) \geq \kappa\right]$, where $n^*$ is the size of our initial input.


\begin{example}[\problem{QuickSort}~\cite{hoare1961algorithm}] \label{ex:quicksort1}

One of the most well-known sorting algorithms is \problem{QuickSort}. Given an array with $n$ distinct elements, \problem{QuickSort} first chooses an element $p$ of the array uniformly at random. It then compares all the other $n-1$ elements of the array with $p$ and divides them in two parts: (i)~elements that are smaller than $p$, and (ii)~elements that are larger than $p$. Finally, it recursively sorts each part. Let $\time(n)$ be the total number of comparisons made by \problem{QuickSort} in handling an array of size $n$. Assuming that there are $h_1(n)$ elements in part (i) and $h_2(n)$ elements in part (ii) above, we have:
$$
\time(n) = n - 1 + \time(h_1(n)) + \time(h_2(n)).
$$
Moreover, it is easy to see that $h_1(n)+h_2(n)=n-1.$
	
%
\end{example}

\begin{example}[\problem{QuickSelect}~\cite{quickselect}] \label{ex:quickselect1}
	 Consider the problem of finding the $d$-th smallest element in an unordered array of $n$ distinct items. A classical algorithm for solving this problem is \problem{QuickSelect}. Much like \problem{QuickSort}, \problem{QuickSelect} begins by choosing an element $p$ of the array uniformly at random. It then compares all the other $n-1$ elements of the array with $p$ and divides them into two parts: (i)~those that are smaller than $p,$ and (ii)~those that are larger. Suppose that there are $d'$ elements in part (i). If $d'<d-1,$ then the algorithm recursively searches for the $(d-d'-1)$-th smallest element of part (ii). If $d'=d-1,$ the algorithm terminates by returning $p$ as the desired answer. Finally, if $d'>d-1,$ the algorithm recursively finds the $d-$th smallest element in part (i). Note that the classical median selection algorithm is a special case of \problem{QuickSelect}. While more involved linear-time non-randomized algorithms exist for the same problem, \problem{QuickSelect} provides a simple randomized variant.
	 In this case, we have the following PRR:
	$$
	\time(n) = n-1 + \time(h(n)).
	$$
Here, $\time(n)$ is the number of comparisons performed by \problem{QuickSelect} over an input of size $n,$ and $h(n)$ is a random variable that captures the size of the remaining array that has to be searched recursively.

%
\end{example}

\subsection{Modeling and Theoretical Results} \label{sec:rppmodel}

\paragraph{The Markov Chain of a PRR} In order to apply our general approach to PRRs, we first need to embed them in stochastic processes. Suppose we are given a PRR of the following form:
\begin{equation} \label{eq:prr}
\begin{matrix}\forall n >0,~~\time(n) = a(n) + \sum_{i=1}^\ell \time(h_i(n)) & & &\time(1) = 0.
\end{matrix}
\end{equation}
Moreover, assume that we are interested in $\time(n^*)$ for a specific initial value $n^*$. We model this PRR as a Markov chain $(X_m)_{m \geq 0}$ in which each state $X_i$ consists of a non-negative integer $k_i$, and a stack of $k_i$ additional non-negative integers. Formally, for every $i$, we have $X_i = (k_i, \langle n_1^{(i)}, n_2^{(i)}, \ldots, n_{k_i}^{(i)} \rangle).$ Intuitively, each $X_i$ models a state of our probabilistic divide-and-conquer algorithm in which there are $k_i$ more recursive calls waiting to be executed, and the $j$-th call needs to process an instance of size $n_j^{(i)}.$ Following this intuition, the transitions of $(X_m)_{m\geq 0}$ are defined as follows:
\begin{align*}
X_m = \begin{cases}
(1, \langle n^*\rangle) & m = 0\\
(k_{m-1} - 1, \langle n_2^{(m-1)}, \ldots, n_{k_{m-1}}^{(m-1)} \rangle) & m > 0 \land n_1^{(m-1)}=1 \\
(k_{m-1} + \ell  -1, \langle h_1(n_1^{(m-1)}), \ldots ,h_{\ell}(n_1^{(m-1)}), n_2^{(m-1)}, \ldots, n_{k_{m-1}}^{(m-1)} \rangle) & m > 0 \land n_1^{(m-1)}>1 \\
\end{cases}
\end{align*}
Basically, we start with the state $(1, \langle n^*\rangle).$ In other words, in the beginning, we have to process an instance of size $n^*$. At each step in the Markov chain, if $X_{m-1}$ contains a call to process an instance of size $1$, we can easily remove that call from the stack when transitioning to $X_m$, because we assumed that $\time(1) = 0.$ Otherwise, we have to perform a call on an instance of size $n_1^{(m-1)},$ which by definition leads to further recursive calls to instances of sizes $h_1(n_1^{(m-1)}), \ldots, h_\ell(n_1^{(m-1)}).$


\paragraph{Stopping Time and Costs} Let $\tau$ be the stopping time of $(X_m)_{m \geq 0}.$ Formally, $\tau := \min \{i~\vert~k_i = 0\}.$
We further define $(C_m)_{m\geq 0}$ to model the total cost of execution until reaching $X_m$:
\begin{align*}
C_m = \begin{cases}
0 & m = 0\\
C_{m-1} & m > 0 \land n_1^{(m-1)}=1 \\
C_{m-1} + a\left(n_1^{(m-1)}\right) & m > 0 \land n_1^{(m-1)}>1 \\
\end{cases}
\end{align*}
Following the definitions, it is easy to see that $C_{\tau}$ is the total time cost of running the divide-and-conquer algorithm on an instance of size $n^*$.

\paragraph{The Exponential Supermartingale of a PRR} Suppose that we are given a function $f: \mathbb{N} \rightarrow [0, \infty),$ such that $f(n) \geq \expv[\time(n)]$ for all $n \in \mathbb{N}.$ In other words, $f(n)$ is an upper-bound for the expected runtime of our divide-and-conquer algorithm on an instance of size $n$. Finding such upper-bounds is a well-studied problem and can be automated using approaches such as~\cite{DBLP:conf/cav/ChatterjeeFM17}. We define a new stochastic process $(Y_m)_{m \geq 0}$ as follows:
$$
Y_m := {\sum_{j = 1}^{k_m} {f(n_{j}^{(m)})}}.
$$
Moreover, let $\alpha := \alpha(n^*) > 1$ be a constant that depends on the initial input size $n^*$, and define a new stochastic process $(Z_m)_{m \geq 0}$ defined as $Z_m := \alpha^{C_m + Y_m}.$ Note that if $Z_m$ is a supermartingale, then we can obtain a concentration bound for $\time(n^*).$ More formally, we have the following lemma:



\begin{lemma}\label{lemm:recurrence}
	Let $n^*$ be the size of the initial input instance for the recurrence relation in \eqref{eq:prr} and $f: \mathbb{N} \rightarrow [0, \infty)$ an upper-bound on the expected value of $\time.$ If for some $\alpha > 1,$ 
	\begin{equation}\label{eq:recurineq}
	\alpha^{f(n)} \geq \alpha^{a(n)} \cdot \expv\left[\alpha^{\sum_{j=1}^{\ell} f(h_j(n))}\right]
	\end{equation}
	for all $0\le n \le n^*$, then $\Pr\left[\mathcal T(n^*)\geq \kappa\right]\leq \alpha^{f(n^*)-\kappa}$ for all $\kappa \geq f(n^*).$
\end{lemma}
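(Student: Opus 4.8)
The plan is to instantiate the general machinery of Theorem~\ref{thm:main} with the Markov chain $(X_m)_{m\ge 0}$, the cost process $(C_m)_{m\ge 0}$, and the process $X_m \leftsquigarrow Y_m = \sum_{j=1}^{k_m} f(n_j^{(m)})$ already constructed just before the statement. First I would fix the filtration $\{\mathcal{F}_m\}$ to be the natural filtration of $(X_m)_{m\ge 0}$, so that $C_m$ and $Y_m$ are both $\mathcal{F}_m$-measurable, and take the stopping time to be $\tau = \min\{i : k_i = 0\}$. I would then need to check the three hypotheses (C1)--(C3) of Proposition~\ref{lem:2} with the role of ``$X_n$'' played by $Y_n$, the role of ``$\alpha$'' and ``$\beta$'' both played by the single constant $\alpha$ from the statement (note $C = \sum_{m=0}^{\tau-1} C_m$ is not literally the shape in Proposition~\ref{lem:2}; rather $C_{\tau}$ is the accumulated cost, so I would reconcile this by noting $C_\tau = \sum_{m=0}^{\tau-1}(C_{m+1}-C_m)$ and identifying the per-step increment with the ``$C_n$'' of the proposition), and $K = 0$.

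For (C1), a.s.\ finiteness of $\tau$: since $\sum_i h_i(n) \le n-1$, every recursive call strictly decreases the total remaining ``size mass'' on the stack, so the chain is absorbed in finitely many steps deterministically — actually $\tau$ is bounded above by a function of $n^*$, which is more than enough. For (C2), I need the analogue $Y_\tau \ge K = 0$: at the stopping time the stack is empty, so $Y_\tau = 0 \ge 0$, and also $C_\tau \ge 0$ trivially since each $a(n) \ge 0$. For (C3), the supermartingale-type condition: I must show $\alpha^{C_{m+1}-C_m}\cdot \condexpv{\alpha^{Y_{m+1}}}{\mathcal{F}_m} \le \alpha^{Y_m}$ a.s. This is the heart of the argument and I would split it by cases according to the transition rule. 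If $n_1^{(m)} = 1$, the cost increment is $0$ and $Y_{m+1} = Y_m - f(1) \le Y_m$ since $f \ge 0$, so the inequality holds with room to spare. If $n_1^{(m)} = n > 1$, then the cost increment is $a(n)$ and $Y_{m+1} = Y_m - f(n) + \sum_{j=1}^\ell f(h_j(n))$, where the $h_j(n)$ are drawn independently of $\mathcal{F}_m$ given the current top-of-stack value $n$; hence $\condexpv{\alpha^{Y_{m+1}}}{\mathcal{F}_m} = \alpha^{Y_m - f(n)} \cdot \expv[\alpha^{\sum_j f(h_j(n))}]$, and the required inequality becomes exactly $\alpha^{a(n)}\cdot \alpha^{-f(n)}\cdot \expv[\alpha^{\sum_j f(h_j(n))}] \le 1$, i.e.\ the hypothesis \eqref{eq:recurineq} (which is assumed for all $0 \le n \le n^*$, and every call size appearing is $\le n^*$ by the size-decrease property, so the hypothesis is invoked only at relevant arguments).

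Having verified (C1)--(C3), Proposition~\ref{lem:2} (or directly Theorem~\ref{thm:main}) yields $\expv[\alpha^{C_\tau}] \le \alpha^{Y_0 - K} = \alpha^{f(n^*)}$, using $Y_0 = f(n^*)$ and $K = 0$; and then Lemma~\ref{lem:1} with $\beta = \alpha$ gives $\Pr[C_\tau \ge \kappa] \le \alpha^{f(n^*)}\cdot \alpha^{-\kappa} = \alpha^{f(n^*)-\kappa}$. Finally I would note that $C_\tau$ equals $\time(n^*)$ in distribution (this was already observed in the text), giving the claimed bound; the restriction $\kappa \ge f(n^*)$ just ensures the bound is non-vacuous (i.e.\ $\le 1$). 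The main obstacle I anticipate is bookkeeping around the conditional-expectation step in (C3): being careful that the $h_j(n)$'s are conditionally independent of the past given only the current top stack entry (this is the ``role of independence'' property used in Theorem~\ref{thm:sound4}'s proof), and handling the potential non-integrability of the exponential process $Z_m = \alpha^{C_m+Y_m}$ — though since $\tau$ is deterministically bounded and all quantities are bounded along any trajectory up to $\tau$, integrability and the application of the Optional Stopping Theorem are unproblematic here, so the argument is somewhat cleaner than the general Proposition~\ref{lem:2}.
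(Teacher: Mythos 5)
Your proof is correct and takes essentially the same route as the paper: the paper's own proof simply asserts that \eqref{eq:recurineq} makes $(Z_m)$ a supermartingale and then invokes Theorem~\ref{thm:main} with $\beta:=\alpha$, exactly the instantiation you carry out. You merely fill in the details the paper leaves implicit (the case split on the transition rule, the per-step versus cumulative cost bookkeeping, $K=0$, and the deterministic bound on $\tau$), all of which check out.
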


\begin{proof}

	Let $(C_m), (Y_m),$ and $(Z_m)$ be defined as above. Equation~\eqref{eq:recurineq} guarantees that $(Z_m)$ is a supermartingale. Applying Theorem~\ref{thm:main} to $(C_m+Y_m)$ with $\beta :=\alpha,$ we obtain $\Pr\left[\time(n^*) \geq f(n^*) + \kappa'\right] \leq \alpha^{-\kappa'}$ for all $\kappa' \geq 0.$ Substituting $\kappa = f(n^*) + \kappa'$ into the latter inequality leads to the desired concentration bound.
\end{proof}

\subsection{Case Studies} \label{sec:case}

Based on the lemma above, we can now derive concentration bounds for a PRR by synthesizing a suitable $\alpha$. We now demonstrate the process with a few examples. In the next section, we will show how this process can be automated.

\subsubsection{Obtaining Concentration Bounds for \problem{QuickSelect}}
	
Consider the PRR in Example~\ref{ex:quickselect1}. We are interested in the tail probability
$\Pr[\mathcal T(n^*) \geq 12 \cdot n^*]$ and wish to synthesize an upper-bound for this probability. Suppose that the given function $f$ is $f(n):=5\cdot n$. In other words, we know that $\expv\left[\time(n)\right] \leq 5 \cdot n.$ We apply Lemma~\ref{lemm:recurrence} to obtain sufficient conditions on $\alpha$:
\begin{align*}
\forall 1 \leq n\leq n^*,~~~\alpha^{5 \cdot n}&\geq \alpha^{n-1} \cdot \frac{1}{n} \cdot \left(\sum_{i=\lceil n/2 \rceil}^{n-1} \alpha^{5\cdot i} + \sum_{i=\lfloor n/2 \rfloor}^{n-1} \alpha^{5 \cdot i}\right)
\end{align*}
By simply computing the value of the geometric series, we get:
\begin{align*}
\forall 1 \leq n \leq n^*,~~~\alpha^{4\cdot n+1} \geq \frac{1}{n} \cdot \left(\frac{\alpha^{5 \cdot n} - \alpha^{5\cdot\lceil n/2 \rceil} + \alpha^{5 \cdot n} - \alpha^{5\cdot\lfloor n/2 \rfloor}}{\alpha^{5}-1}\right)
\end{align*}
since $\alpha^5 - 1\geq 5 \cdot \ln \alpha$, and $\alpha^{5\cdot\lceil n/2 \rceil} + \alpha^{5\cdot\lfloor n/2 \rfloor} \geq 2\alpha^{2.5\cdot n}$, we strengthen the formula to obtain:
\begin{align*}
\forall 1 \leq n \leq n^*,~~~\alpha^{4\cdot n}&\geq \frac{2}{n} \cdot \left(\frac{\alpha^{5 \cdot n} - \alpha^{2.5\cdot n}}{5 \cdot \ln \alpha}\right)
\end{align*}
Let $c := \alpha^n$. We can rewrite the equation above as:
\begin{align*}
\forall 1 \leq n\leq n^*, ~~~5\cdot c^4 \cdot \ln c&\geq 2 \cdot (c^5-c^{2.5})
\end{align*}
By basic calculus, we can further prove that $5 \cdot c^4 \cdot \ln c\geq 2 \cdot (c^5-c^{2.5})$ holds for $c \in [1,2.74].$  Recall that $c = \alpha^n.$ Since for every $\alpha \geq 1$, $\alpha^n$ increases as $n$ increases, our constraint becomes $1\leq \alpha \land \alpha^{n^*}\leq 2.74$, so one possible solution is $\alpha = (2.74)^{1/n^*}.$ Plugging this value back into Lemma~\ref{lemm:recurrence}, we have $\Pr[\mathcal T(n^*) \geq 12 \cdot n^*]\leq (2.74)^{-7} < 0.0009.$

\begin{remark}[Comparison with~\cite{DBLP:journals/jacm/Karp94}] \label{rem:karp1}
	As shown above, our approach is able to synthesize the concentration bound
	$$\textstyle \Pr[\mathcal T(n^*) \geq 12 \cdot n^*]\leq (2.74)^{-7} < 0.0009$$
	for the PRR corresponding to \problem{QuickSelect}. In contrast,~\cite{DBLP:journals/jacm/Karp94} obtains the following concentration bound:
	$$
	\textstyle \Pr[\mathcal T(n^*) \geq 12 \cdot n^*]\leq \left(\frac{3}{4}\right)^8 \approx 0.1001.
	$$
	Hence, our bound is better by a factor of more than $100$.
\end{remark}

The advantage of our approach is not limited to the specific choice of $12 \cdot n^*$ in the tail probability. See Section~\ref{sec:prr-exp} for concentration bounds for other tail probabilities. We now show how a more general result and a tighter bound can be obtained.

Suppose that we aim to find an upper-bound for the tail probability $\Pr[\time(n^*) \geq r \cdot n^*]$ for an arbitrary constant $r \geq 24$. Let $q > e^2$ be a real number and consider the function $f_q(n) := q \cdot n.$ Using a similar calculation as above, defining $c:=\alpha^n$, we obtain:
$$\forall 1\leq n\leq n^*,~~c^{q/2-1}q \cdot \ln q - 2c^{q/2} + 2\geq 0$$
Since $q>e^2$, the inequality $c^{q/2-1} \cdot q \cdot \ln q - 2c^{q/2} + 2\geq 0$ holds for $c\in [1,q]$, so it suffices to find $\alpha$ such that $\alpha^{n^*}\leq q$. We choose $\alpha=q^{1/n^*}$. Plugging this back into Lemma~\ref{lemm:recurrence}, leads to:
$$\Pr[\mathcal T(n^*)\geq r \cdot n^*]\leq q^{q-r}$$

Specifically, by letting $q = {r}/{\ln r},$ we get
$$\Pr[\mathcal T(n^*)\geq r \cdot n^*]\leq \left(\frac{r}{\ln r}\right)^{\frac{r}{\ln r}-r}.$$ 

\begin{remark}[Comparison with~\cite{DBLP:journals/jacm/Karp94}] \label{rem:karp2}
	If we plug $r = 24$ into the general result above, our general approach is able to synthesize the concentration bound
	$$\textstyle \Pr[\mathcal T(n^*) \geq 24 \cdot n^*]\leq \left(\frac{24}{\ln 24}\right)^{\frac{24}{\ln 24}-24}  < 3.612 \times 10^{-15}$$
	for the PRR corresponding to \problem{QuickSelect}. In contrast,~\cite{DBLP:journals/jacm/Karp94} obtains the following concentration bound:
	$$
	\textstyle \Pr[\mathcal T(n^*) \geq 24 \cdot n^*]\leq \left(\frac{3}{4}\right)^{20} \approx 0.00318.
	$$
	Hence, our bound is better by a factor of more than $8.8 \times 10^{11}$.
\end{remark}

\subsubsection{Obtaining Concentration Bounds for \problem{QuickSort}} \label{manual:quicksort}
Consider the PRR in Example~\ref{ex:quicksort1}. Our goal is to synthesize an upper-bound for the tail probability $\Pr[\mathcal T(n^*) \geq 11\cdot n^* \cdot \ln n^* + 12 \cdot n^*].$ Given $f(n):= 9 \cdot n \cdot \ln n$, we apply Lemma \ref{lemm:recurrence} and obtain the following conditions for $\alpha$:\footnote{We assume $0\ln 0 := 0$.}
$$\forall 1\leq n\leq n^*,~~~\alpha^{9\cdot n \cdot \ln n}\geq \alpha^{n-1} \cdot \frac{1}{n} \cdot \sum_{i = 0}^{n-1}{\alpha^{f(i)+f(n-1-i)}}$$
For $1 \leq n\leq 8$, we can manually verify that it suffices to set $\alpha > 1$. In the sequel, we assume $n \geq 8.$
Note that $\left(i \cdot \ln i + (n-i-1) \cdot \ln(n-i-1)\right)$ is monotonically decreasing on $[1,\lfloor n/2\rfloor]$, and monotonically increasing on $[\lceil n/2\rceil ,n]$. We partition the summation above uniformly into eight parts and use the maximum of each part to overapproximate the sum. This leads to the following upper bound for this summation:
\begin{align*}
\sum_{i = 0}^{n-1}{\alpha^{f(i)}}  & \leq \sum_{j=0}^7 \sum_{i=\lceil j \cdot n / 8 \rceil}^{\lfloor (j+1)\cdot n / 8 \rfloor} \alpha^{f(i)+f(n-1-i)}
\\
&\leq\frac{n}{4} \cdot \left(\alpha^{9\cdot n \cdot \ln n} + \alpha^{9 \cdot (\frac{n}{8} \cdot \ln \frac{n}{8} + \frac{7 \cdot n}{8} \cdot \ln \frac{7 \cdot n}{8})} + \alpha^{9\cdot (\frac{n}{4} \cdot \ln \frac{n}{4} + \frac{3\cdot n}{4}\ln \frac{3 \cdot n}{4})} + \alpha^{9 \cdot (\frac{5 \cdot n}{8}\ln \frac{5 \cdot n}{8} + \frac{3 \cdot n}{8}\ln \frac{3 \cdot n}{8})}\right)
\end{align*}
Plugging in this overapproximation back into the original inequality, we get:
$$\alpha^{9\cdot n \cdot \ln n}\geq \alpha^{n-1} \cdot \frac{1}{4} \cdot \left(\alpha^{9\cdot n \cdot \ln n} + \alpha^{9 \cdot (\frac{n}{8} \cdot \ln \frac{n}{8} + \frac{7 \cdot n}{8} \cdot \ln \frac{7 \cdot n}{8})} + \alpha^{9\cdot (\frac{n}{4} \cdot \ln \frac{n}{4} + \frac{3\cdot n}{4}\ln \frac{3 \cdot n}{4})} + \alpha^{9 \cdot (\frac{5 \cdot n}{8}\ln \frac{5 \cdot n}{8} + \frac{3 \cdot n}{8}\ln \frac{3 \cdot n}{8})}\right)$$
for all $8\leq n\leq n^*$. We define $c:=\alpha^{n}$ to do substitution, and we use the following formula to do strengthening:
\begin{align*}
\alpha^{\beta \cdot \ln \beta + (n-\beta) \cdot \ln (n-\beta)}&\leq \alpha^{\beta \cdot \ln n + (n-\beta) \cdot \ln (n-\beta)}
\\&= \alpha^{n \cdot \ln n} \cdot \alpha^{-(n-\beta) \cdot \ln n + (n-\beta) \cdot \ln (n-\beta)}
\\&= \alpha^{n \cdot \ln n + (n-\beta) \cdot \ln \frac{n-\beta}{n}} = c \cdot c^{\frac{(n-\beta)}{n \cdot \ln n} \cdot \ln \frac{n-\beta}{n}}
\end{align*}
By defining $\beta = \frac n 8, \frac n 4, \frac {3n} 8$ respectively, we obtain:
\begin{align*}
 \forall 8\leq n\leq n^*,~~ &c^{9\ln n}\geq \frac{c}{n}\cdot \frac{n}{4} \cdot \left(c^{9\ln n} + c^{9\ln n + \frac{63}{8} \cdot \ln \frac 78} + c^{9\ln n + \frac{27}{4} \cdot \ln \frac 34} + c^{9\ln n + \frac{45}{8} \cdot \ln \frac 58}\right) \\
 \forall 8\leq n\leq n^*,~~ &4 - \left(c + c^{1- \frac{63}{8} \cdot \ln \frac 87} + c^{1 - \frac{27}{4 } \cdot \ln \frac 43} + c^{1 - \frac{45}{8} \cdot \ln \frac 85}\right) \geq 0 \\
\end{align*}
Now we study the following function:
$$\psi(c) = 4 - \left(c + c^{1- \frac{63}{8} \cdot \ln \frac 87} + c^{1 - \frac{27}{4 } \cdot \ln \frac 43} + c^{1 - \frac{45}{8} \cdot \ln \frac 85}\right).$$
By basic calculus, we can prove that $\psi(c)\geq 0$ holds on $[1,2.3]$. Additionally, since for every $\alpha$, $\alpha^{n}$ increases as $n$ increases, by plugging $\alpha = 2.3^{1/(n^*)}$ into Lemma~\ref{lemm:recurrence}, we obtain:
 $$\Pr\left[\mathcal T(n^*)\geq 11\cdot n^* \cdot \ln n^* + 12 \cdot n^*\right]\leq (2.3)^{-2\cdot \ln n^*-12}.$$

\begin{remark}[Comparison with~\cite{DBLP:journals/jacm/Karp94}] \label{rem:karp3}
	As shown above, our approach is able to synthesize a concentration bound that is of the form
 $$\Pr\left[\time(n^*)\geq 11\cdot n^* \cdot \ln n^* + 12 \cdot n^*\right]\leq (2.3)^{-2\cdot \ln n^*-12}$$
	for the PRR corresponding to \problem{QuickSort}. In contrast~\cite{DBLP:journals/jacm/Karp94} provides the following concentration bound:
	$$
	\Pr\left[\time(n^*) \geq 10 \cdot n^* \cdot \ln n^*\right] \leq e^{-4}.
	$$
	Note that the latter bound is a constant, while our bound improves as $n^*$ grows. More concretely, as $n^*$ grows, our bound becomes exponentially better than the one obtained by~\cite{DBLP:journals/jacm/Karp94}.

%
\end{remark}

Just as in the previous case, the advantage of our approach is not limited to bounding $\Pr[\time(n^*)\geq 11\cdot n^* \cdot \ln n^* + 12 \cdot n^*].$ A similar argument can be applied to obtain similar exponentially-decreasing bounds for $\Pr[\time(n^*)\geq a_1\cdot n^* \cdot \ln n^* + a_2 \cdot n^*]$ with other values of $a_1$ and $a_2.$ Moreover, our bounds improve as $a_2$ increases. This is in contrast to~\cite{DBLP:journals/jacm/Karp94} that can only bound $\Pr\left[ \time(n^*) \geq a \cdot n^* \cdot \ln n^* \right].$ Hence, not only do we beat~\cite{DBLP:journals/jacm/Karp94}'s bounds numerically, but we also provide a more fine-grained set of concentration bounds.

\begin{remark}[Comparison with~\cite{DBLP:journals/corr/Tassarotti17}] \label{rem:joseph}
		Recall that our approach synthesized the following bound:
		 $$\Pr\left[\time(n^*)\geq 11\cdot n^* \cdot \ln n^* + 12 \cdot n^*\right]\leq (2.3)^{-2\cdot \ln n^*-12},$$
for the PRR corresponding to \problem{QuickSort}, the work of~\cite{DBLP:journals/corr/Tassarotti17} improves Karp's cook-book method and provides the following bound: 
		$$
		\Pr\left[ \time(n^*) \geq 11 \cdot n^* \cdot \ln n^* + 12 \cdot n^*\right] \leq (8/7)^{-(2 - \ln (8/7))\cdot\ln n^* -11}
		$$
 
Note that our bound beats theirs in both base and exponent, although both bounds are asympotically equal to $\exp(-\Theta(\ln n^*))$.

\end{remark}

\begin{remark}[Comparison with~\cite{DBLP:journals/jal/McDiarmidH96}] \label{rem:mcd}
		While our approach synthesized the following bound:
		 $$\Pr\left[\time(n^*)\geq 11\cdot n^* \cdot \ln n^* + 12 \cdot n^*\right]\leq (2.3)^{-2\cdot \ln n^*-12},$$
for the PRR corresponding to \problem{QuickSort}, the work of~\cite{DBLP:journals/jal/McDiarmidH96} provides the asympotically optimal bound of the following form:
		$$
		\Pr\left[ \time(n^*) \geq 11 \cdot n^* \cdot \ln n^* + 12 \cdot n^*\right] \leq {e}^{-c \cdot \ln n^* \cdot \ln \ln n^*}
		$$
where $c$ is a constant. Our bound is comparable but slightly worse than the optimal result in \cite{DBLP:journals/jal/McDiarmidH96} by only $\ln \ln n^*$ factor. However, their method only works for quick sort, while our method works on a wide-range of algorithms, such as quick select, quick sort and diameter computation. Furthermore, their method is completetly manual, while our approach could be automated once the monotonic interval of $f(i) + f(n-1-i)$ is obtained, and it is remained as a future work.

\end{remark}

\subsection{Automated Algorithm} \label{sec:autorr}

In this section, we consider the problem of automating our PRR analysis. We provide a sound and polynomial time algorithm. Our algorithm is able to automatically synthesize concentration bounds that beat previous methods over several classical benchmarks.

\paragraph{The Setup} In this section, we consider PRRs of the following form:
\begin{equation} \label{eq:prrauto}
\begin{matrix}\forall n >0,~~\time(n) = a(n) + \time(h(n)) := \mathfrak{t} & & &\time(1) = 0.
\end{matrix}
\end{equation}
in which $\mathfrak{t}$ is an expression generated by the following grammar:
\begin{align*}
\mathfrak t ::= & c\ \mid \ n\ \mid \ln n\ \mid \ n\cdot \ln n\ \mid\  \mathfrak {t} + \mathfrak {t}\ \vert\  c \cdot {\mathfrak t} \\
&\ \mid \ \frac 1n \cdot \sum_{i = 0}^{n-1} \time(j)\ \mid\  \frac{1}{n} \cdot \left(\sum_{i = \lfloor n/2\rfloor}^{n - 1} \time(i) + \sum_{i = \lceil n/2\rceil}^{n-1} \time(i)\right)
\end{align*}
where $c$ is a real constant. We also assume that no sum in $\mathfrak t$ appears with a negative coefficient, and that the coefficient of the most significant non-sum term in $\mathfrak t$ (if it exists) is positive. We focus on the problem of synthesizing upper-bounds for the tail probability $\Pr\left[ \time(n^*) \geq \kappa \right].$ We also assume that the input to our algorithm contains a  function $f: \mathbb{N} \rightarrow [0, \infty)$ that serves as an upper-bound on the expected runtime, i.e.~$f(n) \geq \expv[\time(n)].$ There are well-known approaches for automatically deriving such functions, e.g.~see~\cite{DBLP:conf/cav/ChatterjeeFM17}. To enable algorithmic methods, we further assume that $\kappa$ and $f(n)$ are generated using the following grammar, in which $c$ denotes a real constant:
\begin{equation}
E ::= c\ |\ E + E\  |\  c \cdot E\ |\ n \cdot \ln n\ |\ \ln n\ |\ n \label{gramm:2}
\end{equation}
We also assume that the coefficient of the most significant term in $\kappa$ and $f$ is positive.
The goal of our algorithm is to synthesize an $\alpha > 1$ that satisfies
\begin{equation}\label{eq:recurineq2}
\forall 1 \leq n \leq n^*, ~~\alpha^{f(n)} \geq \alpha^{a(n)} \cdot \expv\left[\alpha^ {f(h(n))}\right].
\end{equation}
Such an $\alpha$ will directly lead to a concentration bound as in Lemma~\ref{lemm:recurrence}. Our algorithm relies on the following simple Lemma:
\begin{lemma} \label{lemma:integration}
	For any monotonically increasing function $f$ defined on the interval $[l,r+1]$, where $l,r\in \mathbb N$, we have:
	$$\sum_{i=l}^r f(x)\leq \int_{l}^{r+1} f(x) \text dx.$$
\end{lemma}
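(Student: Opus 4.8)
The plan is to exploit the monotonicity of $f$ on each unit subinterval and compare the constant value $f(i)$ with the integral of $f$ over $[i,i+1]$. First I would split $[l,r+1]$ into the $r-l+1$ unit pieces $[i,i+1]$ for $i=l,l+1,\dots,r$; this is a genuine partition of $[l,r+1]$ precisely because $l,r\in\mathbb N$. Since $f$ is monotonically increasing, for every $x\in[i,i+1]$ we have $f(i)\le f(x)$, and integrating this pointwise inequality over $[i,i+1]$ gives
$$f(i)=\int_i^{i+1} f(i)\,\mathrm{d}x\le \int_i^{i+1} f(x)\,\mathrm{d}x.$$
(Here one uses that a monotone function on a bounded interval is Riemann integrable, so every integral written down is well-defined.)

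Next I would sum this inequality over $i=l,\dots,r$. The left-hand side becomes $\sum_{i=l}^r f(i)$, while the right-hand side collapses by additivity of the integral over adjacent intervals:
$$\sum_{i=l}^r \int_i^{i+1} f(x)\,\mathrm{d}x=\int_l^{r+1} f(x)\,\mathrm{d}x.$$
Combining the two displays yields the claimed bound $\sum_{i=l}^r f(i)\le \int_l^{r+1} f(x)\,\mathrm{d}x$.

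There is essentially no hard step here; the only points worth double-checking are (a) integrability of $f$ on $[l,r+1]$, which is automatic from monotonicity, and (b) the indexing, i.e.~that $\bigcup_{i=l}^r [i,i+1]=[l,r+1]$, which holds since $l$ and $r$ are integers. The symmetric statement for a decreasing $f$ (needed elsewhere for over-approximations on the ``other side'') would instead compare $f(i)$ with $\int_{i-1}^{i} f(x)\,\mathrm{d}x$, but that variant is not required for this lemma.
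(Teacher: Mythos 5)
Your proof is correct and is exactly the standard argument the paper has in mind (the paper states this as a "simple Lemma" and omits the proof entirely): bound $f(i)\le\int_i^{i+1}f(x)\,\mathrm{d}x$ on each unit subinterval by monotonicity and sum, using additivity of the integral. You also correctly read the paper's typo $\sum_{i=l}^r f(x)$ as $\sum_{i=l}^r f(i)$, and your side remarks on integrability and indexing are accurate.
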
\label{lemma:summation}

\paragraph{Overview of the Algorithm} Our algorithm consists of five steps:
\begin{enumerate}
	\item[\emph{Step 1.}] The algorithm symbolically computes Inequality~\eqref{eq:recurineq2}.
	\item[\emph{Step 2.}] The algorithm replaces every summation in~\eqref{eq:recurineq2} with an over-approximation, hence strengthening the requirements. The algorithm has two ways of obtaining such over-approximations: If the expression inside the sum has an elementary antiderivative that can be computed symbolically, the algorithm applies Lemma~\ref{lemma:integration}. Otherwise, it finds a bound based on sampling and relying on monotonicity.
	\item[\emph{Step 3.}] The algorithm introduces a new variable $c := c(\alpha, n)$ and substitutes it into the inequality. It also removes non-significant terms, hence further strengthening the inequality.
	\item[\emph{Step 4.}] The algorithm uses a calculus technique to obtain a value $c^* > 1$ for $c$, such that the inequality holds on $[1, c^*],$ but not on $(c^*, +\infty).$ If no such value is found, the algorithm returns the obvious upper-bound $1$ for the tail probability.
	\item[\emph{Step 5.}] The algorithm plugs $c^*$ back into the definition of $c$ and obtains a value for $\alpha$. Note that this value depends on $n^*$.
\end{enumerate}

\paragraph{Our Synthesis Algorithm} We now present each step of the algorithm in more detail:

\paragraph{Step 1. Computing Conditions on $\alpha$} The algorithm creates a variable $\alpha$ and symbolically computes Inequality~\eqref{eq:recurineq2}.

\begin{example}
	Consider the PRR for \problem{QuickSelect} (Example~\ref{ex:quickselect1}) and assume $f(n):=5 \cdot n.$ The algorithm symbolically computes Inequality~\eqref{eq:recurineq2} and obtains the following:
	\begin{align*}
	\forall 1 \leq n\leq n^*,~~\alpha^{5\cdot n}&\geq \alpha^{n-1} \cdot \frac{1}{n} \cdot \left(\sum_{i=\lceil n/2 \rceil}^{n-1} \alpha^{5\cdot i} + \sum_{i=\lfloor n/2 \rfloor}^{n-1} \alpha^{5 \cdot i}\right).
	\end{align*}
\end{example}

\paragraph{Step 2. Over-approximating the Summations} Note that, by design, the expressions inside our summations are monotonically increasing with respect to the summation index. As such, we can apply Lemma~\ref{lemma:summation} to obtain an upper-bound for each sum. To do so, the algorithm symbolically computes an antiderivative of the expression inside the sums. Also, note that the antiderivative is always concave, given that the initial expression is increasing. The algorithm uses this concavity property to remove floors and ceilings, hence strengthening the inequality.

However, there are cases where no closed-form or elementary antiderivative can be obtained, e.g.~if the expression is $\alpha^{n \cdot \ln n}.$ In such cases, the algorithm partitions the summation uniformly into $B$ parts (as in Section~\ref{manual:quicksort}) and uses the maximum element of a part to over-approximate each of its elements. Furthermore, to tackle with floors and ceilings in such cases, we would strengthen the inequality by replacing $\lceil n/2\rceil$ to $\lceil (n-1)/2 \rceil$.  The algorithm starts with $B = 2,$ and it doubles $B$ to obtain finer over-approximations and repeats the following steps until it succeeds in synthesizing a concentration bound.

\begin{example}
	Continuing with the previous example, we have
	 $$\sum_{i=l}^{r}{\alpha^{5 \cdot i}}\leq \int_{l}^{r+1} \alpha^{5 \cdot x}\text dx = \frac{\alpha^{5 \cdot (r+1)} - \alpha^{5 \cdot l}}{5 \cdot \ln \alpha}$$
	 The algorithm applies this over-approximation to obtain the following strengthened inequality:
	\begin{align*}
	\forall 1 \leq n\leq n^*,~~\alpha^{5\cdot n}&\geq \alpha^{n-1}\cdot  \frac{1}{n}\cdot  \frac{\alpha^{5\cdot n} - \alpha^{5\cdot \lceil n/2\rceil} + \alpha^{5\cdot n} - \alpha^{5\cdot \lfloor n/2\rfloor}}{5 \cdot \ln \alpha}
	\end{align*}
	The algorithm then further strengthens the inequality by removing the floors and ceilings due to the concavity of $\alpha^{5 \cdot x}$:
	\begin{align*}
	\forall 1 \leq n\leq n^*,~~\alpha^{5\cdot n}&\geq \alpha^{n-1}\cdot  \frac{2}{n}\cdot  \frac{\alpha^{5\cdot n} - \alpha^{2.5\cdot n} }{5\ln \alpha}.
	\end{align*}
\end{example}

\paragraph{Step 3. Substitution and Simplification} Let $g(n)$ be the most significant term in $f(n),$ ignoring constant factors. The algorithm defines a new variable $c = c(\alpha, n) := \alpha^{g(n)} > 1,$ and rewrites the inequality based on $c$. It then moves everything to the LHS, writing the inequality in the form of $\mathfrak{e} \geq 0.$ It also eliminates all fractions by multiplying the inequality by their denominators. This is sound, because by construction, the denominators of all fractions are positive at this point. Then, the algorithm inspects all the terms in $\mathfrak{e}.$ If a term is of the form $\mathfrak{e}_1 \cdot c^{\mathfrak{e}_2}$ in which $\mathfrak{e}_2$ contains $n$ as a sub-expression, if $\mathfrak{e}_1 \cdot \mathfrak{e}_2> 0,$ then it can be simplified to $\mathfrak{e}_1$, if$\mathfrak{e}_1 \cdot \mathfrak{e}_2< 0,$ then we check whether $\mathfrak{e}_2\leq 1$ holds, if it holds, it would be simplified into $c$. This preserves soundness and strengthens the inequality. This preserves soundness and strengthens the inequality. Our algorithm eagerly applies as many such simplifications as possible. Finally, the algorithm divides $\mathfrak{e}$ by the greatest common divisor of its terms.



\begin{example}
	
	Continuing with the previous example, we have $f(n) = 5 \cdot n,$ so the most significant term in $f(n)$ is $n$. The algorithm therefore defines $c:=\alpha^n$, and rewrites the inequality as follows: (Note that $\ln c = n \cdot \ln \alpha.$)
	\begin{align*}
	\forall 1 \leq n\leq n^*,~~c^{5}&\geq c^{1 - 1/n} \cdot  \frac{2\cdot c^5 - 2 \cdot c^{2.5}}{5 \cdot \ln c}
	\end{align*}
	It then moves everything to the LHS,
	and eliminates the fractions by multiplying their denominators:
	\begin{align*}
	\forall 1 \leq n\leq n^*,~~5 \cdot c^{5}\cdot \ln c- 2\cdot c^{1 - 1/n}\cdot (c^5 - c^{2.5})\geq 0
	\end{align*}
	Note that $c^{-1/n}<1$ appears on the LHS with negative coefficient, so its removal would strengthen the inequality. The algorithm simplifies the corresponding term, obtaining the following:
	\begin{align*}
	\forall 1 \leq n\leq n^*,~~5 \cdot c^{5}\cdot \ln c-  2\cdot (c^6 - c^{3.5})\geq 0
	\end{align*}
	The algorithm divides the inequality by $c^{3.5},$ obtaining:
	$$
	\forall 1\leq n\leq n^*,~~5 \cdot c^{1.5} \cdot \ln c - 2 \cdot c^{2.5} + 2 \geq 0
	$$
\end{example}

Before moving to the next step, we need two simple lemmas:
\begin{lemma}[Proof in Appendix~\ref{proof:form}]  \label{lemma:form}
	After Steps~1--3 above, our inequality is simplified to $$\forall 1 \leq n \leq n^*,~~ \psi(c) \geq 0$$ in which $\psi(c)$ is a univariate function of the following form:
	\begin{equation}
	\label{eq:form}
	\psi(c) = \sum_{i\in \mathcal I} \mu_i \cdot c^{\nu_i} \ln^{\xi_i} c
	\end{equation}
	where $\mu_i, \nu_i \in \mathbb R$ and $\xi_i\in \{0,1\}$. Also, note that this form is closed under derivation.
\end{lemma}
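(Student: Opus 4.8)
The plan is to trace through Steps 1--3 of the algorithm and show that every syntactic operation preserves the stated normal form \eqref{eq:form}. I would set up an induction on the grammar of $f$ and $\kappa$ (grammar~\eqref{gramm:2}) together with the structure of the PRR expression $\mathfrak{t}$, establishing that at the end of Step~3 the inequality is a finite $\pm$-combination of \emph{atoms} of the shape $\mu\cdot c^{\nu}\ln^{\xi} c$ with $\mu,\nu\in\mathbb{R}$ and $\xi\in\{0,1\}$, and that no $n$ survives except inside exponents that Step~3 has already eliminated or turned into the fixed symbol $c$. The key observation is that $c$ is defined as $\alpha^{g(n)}$ for the most significant term $g(n)$ of $f$, so every exponential $\alpha^{E(n)}$ that occurs can be written as $c^{E(n)/g(n)}$, and by the form of the grammar \eqref{gramm:2} the ratio $E(n)/g(n)$ is, up to lower-order additive terms, a constant --- those lower-order terms are exactly what Step~3's simplification rule ($\mathfrak{e}_1\cdot c^{\mathfrak{e}_2}\rightsquigarrow \mathfrak{e}_1$ or $\rightsquigarrow c$) is designed to discard soundly.

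First I would handle Step~1: symbolic expansion of \eqref{eq:recurineq2} yields $\alpha^{f(n)} \ge \alpha^{a(n)}\cdot\frac1n\sum \alpha^{f(\cdot)}$ (or the two-sum variant), which is a rational expression in the symbols $\alpha^{f(i)}$; nothing here leaves the class of $\pm$-sums of terms $(\text{rational in }n)\cdot\alpha^{(\text{linear-in-}f\text{ exponent})}$. Then Step~2 replaces each $\sum_{i=l}^{r}\alpha^{p(i)}$ either by $\int_l^{r+1}\alpha^{p(x)}\,\mathrm{d}x$, which (since $p$ is drawn from \eqref{gramm:2} and the antiderivative, when elementary, is itself of the form $\alpha^{p(x)}/(\text{poly in }\ln\alpha)$ up to constants) introduces exactly one factor of $\ln\alpha = \ln c / g(n)$ --- this is the only source of $\ln c$, and it appears to the first power, explaining $\xi_i\in\{0,1\}$; or, in the non-elementary case ($\alpha^{n\ln n}$), by $\tfrac{n}{B}\cdot(\text{max of }B\text{ bucket values})$, which again produces terms $\alpha^{(\text{affine in }n\ln n,\,\ln n,\,n)}$ times rationals in $n$. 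In Step~3 the substitution $c=\alpha^{g(n)}$ rewrites each such exponential as $c$ raised to a power of the form (constant) $+$ (something vanishing in significance relative to $g$), the fraction-clearing multiplies through by positive denominators (products of $\ln c$ and powers of $c$), and the simplification rules collapse the residual $n$-dependent exponents; I would argue rule-by-rule that each such move sends a $\pm$-sum of atoms to a $\pm$-sum of atoms, and that dividing by the gcd of the terms is harmless. Closure under differentiation is then immediate: $\frac{\mathrm d}{\mathrm dc}\bigl(\mu c^{\nu}\ln^{\xi}c\bigr) = \mu\nu\, c^{\nu-1}\ln^{\xi} c + \mu\xi\, c^{\nu-1}\ln^{\xi-1}c$, and since $\xi\in\{0,1\}$, $\xi-1\in\{-1,0\}\cap\{0,1\}$ only when $\xi=1$ gives $\xi-1=0$ --- the $\ln^{-1}$ term only arises from $\xi=0$ where its coefficient $\mu\xi$ is $0$, so it never actually appears; thus the derivative stays in the class.

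The main obstacle I expect is Step~3's simplification/soundness bookkeeping: I need to show precisely that after substituting $c=\alpha^{g(n)}$, every term either already has a \emph{constant} exponent in $c$ or is caught by one of the rewrite rules with the correct sign so that dropping the $n$-dependent part \emph{strengthens} the inequality (the claim in the excerpt's Step~3), and that what remains is genuinely free of $n$. This requires a careful comparison of significance orders among $\{c,\, n,\, \ln n,\, n\ln n\}$ transported through division by $g(n)$, plus the sign hypotheses the paper imposes ("no sum with negative coefficient", "leading coefficient of $\kappa$ and $f$ positive"). Everything else --- the induction on grammar, the integral bound from Lemma~\ref{lemma:integration}, the bucket over-approximation, and the derivative closure --- is routine once this normalization is nailed down, so I would devote the bulk of the appendix proof to that step.
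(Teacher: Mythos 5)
Your proposal matches the paper's proof in essence: the appendix proof likewise traces Steps 1--3, isolates your ``key observation'' as a standalone auxiliary lemma (that $u\cdot\alpha^{h(n)} = u\cdot c^{h(n)/g(n)}$ collapses to an atom $\mu\cdot c^{\nu}$ under Step 3's sign-based rewrite rules, since each exponent ratio $h_i(n)/g(n)$ is either $1$ or bounded by $1$ and hence discarded or replaced by $c$), and then verifies the normal form by a concrete case split on the leading term of $f$ (namely $q\ln n$, $qn$, or the non-elementary case handled with $B$ buckets) rather than by your grammar induction --- an organizational rather than substantive difference. The points of substance --- that $\ln c$ enters exactly once via the $\ln\alpha$ in the antiderivative's denominator, that clearing fractions and Step 3's simplifications remove all residual $n$-dependence, and that closure under differentiation is immediate because the $\ln^{-1}c$ term carries coefficient $\mu\xi=0$ when $\xi=0$ --- are identical in both arguments.
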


Given that our inequality is now univariate in $c = \alpha^{g(n)}$, we should look for a value $c^* > 1,$ such that $\psi(c) \geq 0$ on $[1, c^*].$ Intuitively, if we have such a $c^*$, then we can let $\alpha := \left(c^*\right)^{1/g(n)}$ to solve the problem. Moreover, to find the best possible concentration bound, we would like to find the largest possible $c^*$. To simplify the matter, we attempt to obtain a $c^*$ such that $\psi(c) \geq 0$ on $[1, c^*]$ and $\psi(c) < 0$ on $(c^*, +\infty).$ Moreover, we say that $\psi$ is \emph{separable} (into non-negative and negative parts) iff such a $c^*$ exists.
The following lemma provides sufficient conditions for separability:

\begin{lemma}[Proof in Appendix~\ref{proof:Posneg}] \label{lemma:PosNeg}
	Let $\psi(c)$ be a function of the form~\eqref{eq:form}. If at least one of the following conditions holds, then $\psi$ is separable:
	\begin{enumerate}[(i) ]
		\item $\psi$ is strictly decreasing over $[1, +\infty)$ and $\psi(1) \geq 0$.
		\item $\psi'$ is separable and $\psi(1) \geq 0$.
		\item $\psi/c^a$ is separable for some constant $a \in \mathbb R$.
	\end{enumerate}
\end{lemma}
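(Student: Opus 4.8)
The plan is to prove this by elementary real analysis, organised around a single core observation and two easy reductions. First I would record the relevant structural facts about functions of the form~\eqref{eq:form}: each such $\psi$ is continuous and differentiable on $[1,+\infty)$ (every summand $c^{\nu_i}\ln^{\xi_i}c$ is smooth on $(0,+\infty)$ and the value $\ln 1=0$ causes no trouble), the class is closed under differentiation (as already observed after Lemma~\ref{lemma:form}) and under division by a monomial $c^{a}$, and the behaviour of $\psi$ at $+\infty$ is governed by its \emph{leading term}: if we order the exponent pairs $(\nu_i,\xi_i)$ lexicographically (first by $\nu_i$, then by $\xi_i$), the term with the largest such pair and nonzero coefficient dominates, so $\psi(c)$ tends to $+\infty$, to $-\infty$, or to a finite limit according to the sign and shape of that term. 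For the $\psi$ actually produced by Steps~1--3, the standing positivity assumptions on $f$ and $\kappa$ (together with $h(n)\le n-1$, which makes the subtracted expected term grow faster than $\alpha^{f(n)}$) force the leading coefficient to be negative and its exponent to diverge, so $\psi(c)\to-\infty$; the remaining possibilities I would flag as degenerate and dispatch at the end.

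Next I would isolate the following \textbf{core claim}: if $\phi$ of the form~\eqref{eq:form} is continuous and strictly decreasing on an interval $[b,+\infty)$ with $\phi(b)\ge 0$ and $\phi(c)\to-\infty$, then there is a unique $c^{\dagger}\ge b$ with $\phi\ge 0$ on $[b,c^{\dagger}]$ and $\phi<0$ on $(c^{\dagger},+\infty)$. This is immediate: strict monotonicity makes $\{c\ge b:\phi(c)\ge 0\}$ a (closed, by continuity) interval whose left endpoint is $b$, and $\phi(c)\to-\infty$ together with the intermediate value theorem forces its right endpoint $c^{\dagger}$ to be finite with $\phi(c^{\dagger})=0$ and $\phi<0$ strictly beyond it.

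Then all three conditions follow by reduction to the core claim. For condition~(iii): if $\psi/c^{a}$ is separable with witness $c^{*}$, then since $c^{a}>0$ throughout $[1,+\infty)$, multiplying by $c^{a}$ preserves every inequality, so $\psi$ is separable with the \emph{same} $c^{*}>1$. For condition~(i): apply the core claim with $\phi=\psi$ and $b=1$, obtaining a separating $c^{\dagger}\ge 1$; moreover $c^{\dagger}>1$ because $\psi$ is continuous with $\psi(1)\ge 0$ and hence stays $\ge 0$ on a right-neighbourhood of $1$. For condition~(ii): let $c_{1}>1$ be a separation witness for $\psi'$; then $\psi'\ge 0$ on $[1,c_{1}]$ makes $\psi$ non-decreasing there, so $\psi\ge\psi(1)\ge 0$ on $[1,c_{1}]$, while $\psi'<0$ on $(c_{1},+\infty)$ makes $\psi$ strictly decreasing on $[c_{1},+\infty)$ with $\psi(c_{1})\ge 0$; the core claim applied on $[c_{1},+\infty)$ gives $c^{\dagger}\ge c_{1}>1$ with $\psi\ge 0$ on $[c_{1},c^{\dagger}]$ and $\psi<0$ beyond, and splicing the two pieces yields $\psi\ge 0$ on $[1,c^{\dagger}]$ and $\psi<0$ on $(c^{\dagger},+\infty)$, i.e. separability.

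I expect the main obstacle to be not any individual step but the careful bookkeeping in the leading-term analysis: one must justify precisely that a non-constant generalized polynomial $\sum_i\mu_i c^{\nu_i}\ln^{\xi_i}c$ with a negative, diverging leading term indeed tends to $-\infty$, so that the witness $c^{\dagger}$ is finite and the separation is genuine rather than vacuous. This is exactly what rules out the degenerate cases --- $\psi$ eventually non-negative (so that $\{c:\psi(c)\ge0\}=[1,+\infty)$), and the boundary situation $\psi(1)=0$ with $\psi$ immediately strictly decreasing (forcing $c^{\dagger}=1$) --- neither of which arises for the $\psi$ emitted by Steps~1--3 and which in any event fall under the algorithm's fallback to the trivial bound $1$. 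All remaining steps are routine applications of continuity, the intermediate value theorem, and monotonicity.
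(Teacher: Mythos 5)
Your proposal is correct and follows essentially the same route as the paper's own proof: a supremum/IVT argument for the strictly decreasing case, reduction of condition (ii) to that case via monotonicity of $\psi$ on either side of the separation point of $\psi'$, and multiplication by the positive factor $c^a$ for condition (iii). The only difference is that you explicitly worry about whether the separating point is finite (i.e.\ whether $\psi$ eventually becomes negative), a point the paper's proof leaves implicit by simply taking $c^*:=\sup\{x\ge 1\mid \psi(x)\ge 0\}$; your extra leading-term discussion is a reasonable way to discharge this, though it is not needed if one reads ``separable'' as permitting $c^*=+\infty$.
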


\paragraph{Step 4. Ensuring Separability and Finding $c^*$}
The algorithm attempts to prove separability of $\psi$ using Lemma~\ref{lemma:PosNeg}. Rule (iii) of the Lemma is always used to simplify the expression $\psi$.  The algorithm first evaluates $\psi(1),$ ensuring that it is non-negative. Then, it computes the derivative $\psi'$. If the derivative is negative, then case (i) of Lemma~\ref{lemma:PosNeg} is satisfied and $\psi$ is separable. Otherwise, the algorithm tries to recursively prove the separability of $\psi'$ using the same method, hence ensuring case~(ii) of the Lemma. If both cases fail, the algorithm has failed to prove the separability of $\psi$ and returns the trivial upper-bound $1.$ On the other hand, if $\psi$ is proven to be separable, the algorithm obtains $c^*$ by a simple binary search using the fact that for all $c \geq 1,$ we have $\psi(c) < 0 \Leftrightarrow c > c^*.$

\begin{example}
	Continuing with the previous example, we have
	$$
	\psi(c) = 5 \cdot c^{1.5} \cdot \ln c - 2 \cdot c^{2.5} + 2
	$$
	The algorithm evaluates $\psi(1) = 0,$ which is non-negative. Hence, it computes the following derivative:
	$$
	\psi'(c) = 7.5 \cdot c^{0.5} \cdot \ln c+5 \cdot c^{0.5}-5 \cdot c^{1.5}
	$$
	Note that $\psi'(c)$ is not always negative for $c \geq 1$. Hence, the algorithm tries to recursively prove that $\psi'(c)$ is separable. It first simplifies $\psi'$ to obtain:
	$$
	\psi_1(c) = 7.5 \cdot \ln c + 5 - 5 \cdot c
	$$
	Now it tries to prove the separability of $\psi_1.$ It first evaluates $\psi_1(1) = 0,$ and then computes the derivative:
	$$
	\psi_1'(c) = \frac{7.5}{c} - 5
	$$
	Another level of recursion shows that $\psi_1'(1) \geq 0$ and $\psi_1'$ is strictly decreasing over $[1, +\infty).$ So, it is separable. Hence, it is proven that $\psi$ is separable, too. The algorithm performs a binary search and obtains $c^* \approx 2.74.$
	
\end{example}

\paragraph{Step 5. Applying Lemma~\ref{lemm:recurrence}}
Note that for every $\alpha,$ $c := \alpha^{g(n)}$ increases as $n$ increases. Hence, it suffices to find an $\alpha$ such that $\alpha^{g(1)} > 1$ and $\alpha^{g(n^*)} \leq c^*.$ The algorithm calls an off-the-shelf solver to obtain the largest possible $\alpha$ that satisfies these constraints. It then plugs this $\alpha$ into Lemma~\ref{lemm:recurrence} and reports the following concentration bound:
$\Pr\left[\mathcal T(n^*)\geq \kappa\right]\leq \alpha^{f(n^*)-\kappa}$ for all $\kappa \geq f(n^*).$

\begin{example}
	Continuing with previous examples, we had $c = \alpha^n$ and $c^* = 2.74.$ So, the algorithm solves the constraints $\alpha > 1$ and $\alpha^{n^*} \leq 2.74.$ It is easy to see that $\alpha= \left( 2.74 \right)^{1/n^*}$ is the optimal solution. Hence, the algorithm computes and reports the following concentration bound:
	$$\Pr\left[\mathcal T(n^*)\geq \kappa\right]\leq \left(2.74\right)^{5 - \kappa/n^*}$$
	for all $\kappa \geq 5 \cdot n^*.$ This is equivalent to:
	$$\Pr\left[\mathcal T(n^*)\geq \kappa' \cdot n^*\right]\leq \left(2.74\right)^{5 - \kappa'}$$
	for all $\kappa' \geq 5.$ Note that the bound decreases exponentially as $\kappa'$ grows.
\end{example}

\begin{theorem}[Soundness]
	Given a PRR $\time(n) = a(n) + \time(h(n)),$ an expression $\kappa,$ and an upper-bound function $f$ for the expected runtime of $\time,$ all generated by Grammars~\eqref{eq:prrauto} and~\eqref{gramm:2}, any concentration bound
	$$\forall \kappa \geq f(n^*),~~\Pr\left[\mathcal T(n^*)\geq \kappa\right]\leq \alpha^{f(n^*)-\kappa}$$ generated by the algorithm above is a correct bound.
\end{theorem}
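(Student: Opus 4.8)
The plan is to show that each of the five algorithmic steps preserves soundness in the sense that, if the algorithm outputs a value $\alpha > 1$, then that $\alpha$ satisfies Inequality~\eqref{eq:recurineq2}, so that the final concentration bound follows directly from Lemma~\ref{lemm:recurrence}. I would structure the argument as a chain of implications: letting $\Phi_0$ denote the original Inequality~\eqref{eq:recurineq2} and $\Phi_1, \dots, \Phi_4$ the successive transformed inequalities produced after Steps~1--4, I would prove that $\Phi_{i+1} \Rightarrow \Phi_i$ for each $i$, and that the $\alpha$ extracted in Step~5 satisfies $\Phi_4$. Composing these implications yields $\alpha \models \Phi_0$, and then Lemma~\ref{lemm:recurrence} gives the stated bound $\Pr[\mathcal T(n^*) \geq \kappa] \leq \alpha^{f(n^*) - \kappa}$ for all $\kappa \geq f(n^*)$.

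The individual implications are where the work lies, though most are routine. For Step~1, $\Phi_1 = \Phi_0$ since it is purely a symbolic rewriting, so there is nothing to prove. For Step~2, I would invoke Lemma~\ref{lemma:summation} (Lemma~\ref{lemma:integration}): since every summand is monotonically increasing in the summation index by the syntactic restrictions on the grammar~\eqref{gramm:2} (every exponent is an increasing function of $n$ and $\alpha > 1$), replacing each $\sum_{i=l}^{r} g(i)$ by $\int_l^{r+1} g(x)\,\mathrm{d}x$ only enlarges the right-hand side, hence strengthens the constraint; for the alternative branch, over-approximating each block of a uniform partition by its maximum element is trivially an upper bound, and replacing $\lceil n/2 \rceil$ by $\lceil (n-1)/2 \rceil$ lowers a lower limit of summation over a non-negative increasing summand, again enlarging the sum. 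The removal of floors/ceilings via concavity of the antiderivative needs a short argument: for a concave antiderivative $G$, the expression $G(\text{ceil}) - G(\text{floor})$ evaluated at integer-rounded arguments is bounded by $G$ evaluated at the exact (non-integer) arguments, because concavity controls the slope. For Step~3, the substitution $c := \alpha^{g(n)}$ is an invertible change of variable (for fixed $n$, since $\alpha > 1 \Leftrightarrow c > 1$), so it preserves equivalence; multiplying through by positive denominators preserves the inequality; and the term-simplification rules each need a sign check: dropping $\mathfrak e_1 c^{\mathfrak e_2}$ when $\mathfrak e_1 \mathfrak e_2 > 0$ means the term is positive (as $c \geq 1$ makes $c^{\mathfrak e_2}$ move in the direction of $\mathfrak e_2$'s sign — here I would be careful and spell out that $\mathfrak e_2 > 0, \mathfrak e_1 > 0$ or $\mathfrak e_2 < 0, \mathfrak e_1 < 0$ both give a positive term bounded below by $\mathfrak e_1$), whereas in the negative case the replacement by $c$ requires $c^{\mathfrak e_2} \geq c$, i.e.\ $\mathfrak e_2 \geq 1$ (for $c \geq 1$), which is exactly the side condition the algorithm checks — so strengthening holds. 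For Step~4, correctness is essentially the content of Lemma~\ref{lemma:PosNeg}: if the algorithm succeeds, it has a certificate that $\psi$ is separable with threshold $c^*$, meaning $\psi(c) \geq 0$ on $[1, c^*]$, and this is exactly $\Phi_4$ restricted to $c \in [1, c^*]$; the binary search correctness uses the equivalence $\psi(c) < 0 \Leftrightarrow c > c^*$ guaranteed by separability.

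The main obstacle, and the step deserving the most care, is Step~5 together with the reduction of "$\Phi_4$ holds for $c = \alpha^{g(n)}$, all $1 \leq n \leq n^*$" to "$\Phi_4$ holds for all $c \in [1, c^*]$." I would argue: for $\alpha > 1$, the function $n \mapsto \alpha^{g(n)}$ is increasing in $n$ (since $g$, the most significant term in $f$, is one of $n$, $\ln n$, $n \ln n$, all increasing — this uses the grammar restriction and the positivity of leading coefficients), so as $n$ ranges over $\{1, \dots, n^*\}$, the value $c = \alpha^{g(n)}$ ranges within $[\alpha^{g(1)}, \alpha^{g(n^*)}]$. Requiring $\alpha^{g(1)} > 1$ and $\alpha^{g(n^*)} \leq c^*$ therefore confines $c$ to $[1, c^*]$, where $\Phi_4$ holds by Step~4; hence $\Phi_4$ holds at every relevant $(n, c)$ pair. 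The subtlety is that the transformation in Step~3 introduced $c$ as a stand-in for $\alpha^{g(n)}$ but other occurrences of $n$ (e.g.\ in exponents like $c^{1 - 1/n}$ before simplification, or in coefficients) may have been eliminated only after a strengthening that is valid uniformly in $n$; I would make sure the invariant maintained through Step~3 is precisely "$\Phi_{i+1}(c) \Rightarrow \Phi_i(\alpha, n)$ whenever $c = \alpha^{g(n)}$ and $1 \leq n \leq n^*$ and $\alpha > 1$," so that once $\Phi_4(c)$ is secured for all $c \in [1,c^*]$ the conclusion transfers back cleanly. Assembling the chain and applying Lemma~\ref{lemm:recurrence} then completes the proof.
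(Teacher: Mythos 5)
Your proposal is correct and follows the same strategy as the paper, whose own proof is just the one-line observation that every step strengthens (never weakens) Inequality~\eqref{eq:recurineq2}, so that Lemma~\ref{lemm:recurrence} applies to any $\alpha$ the algorithm returns; you simply spell out the per-step implications that the paper leaves implicit. One small correction: the floor/ceiling removal in Step~2 (e.g.\ $\alpha^{5\lceil n/2\rceil}+\alpha^{5\lfloor n/2\rfloor}\ge 2\alpha^{2.5\cdot n}$, whose replacement enlarges the right-hand side) rests on \emph{convexity} of $x\mapsto\alpha^{5x}$ rather than on concavity of the antiderivative as you suggest.
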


\emph{Proof Sketch.} While the algorithm strengthens the inequality at some points, it never weakens it. Hence, any concentration bound found by our algorithm is valid, and the algorithm is sound.

\begin{theorem}
	Assuming fixed bounds on the number of iterations of the binary search, and the approximation parameter $B$, given a PRR $\time(n) = a(n) + \time(h(n)),$ an initial value $n^* \in \mathbb N$, an expression $\kappa,$ and an upper-bound function $f$ for the expected runtime of $\time,$ all generated by Grammars~\eqref{eq:prrauto} and~\eqref{gramm:2}, the algorithm above terminates in polynomial time with respect to the size of input.
\end{theorem}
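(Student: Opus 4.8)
The plan is to traverse the five steps of the algorithm in order and show that each runs in time polynomial in the total bit-size of the input (the PRR $\time(n)=a(n)+\time(h(n))$, the value $n^*$, and the expressions $\kappa$ and $f$), while producing an intermediate expression of polynomial size; since the $B$-doubling loop in Step~2 and the binary search in Step~4 are capped by the standing hypotheses, composing these steps yields the theorem. The first observation is the rigidity forced by Grammars~\eqref{eq:prrauto} and~\eqref{gramm:2}: after collapsing $E+E$ and $c\cdot E$, each of $f,\kappa,a$ is an $\mathbb R$-linear combination of the four fixed functions $1,n,\ln n,n\ln n$, so it has $O(1)$ monomials, and $\mathfrak t$ contains $O(1)$ of the two permitted kinds of sums.

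Granting this, Steps 1--3 and 5 amount to size/time bookkeeping. Step~1 builds an inequality with $O(1)$ summands. In Step~2 each sum is over-approximated either by the elementary antiderivative of $\alpha^{(\text{affine in }x)}$, namely $\alpha^{(\cdot)}/\!\bigl((\cdot)'\ln\alpha\bigr)$, computed symbolically in $O(1)$ operations, or by a maximum over $B=O(1)$ uniform pieces; either way only $O(1)$ new monomials appear, and floors and ceilings are removed by a local concavity rewrite. Step~3 substitutes $c:=\alpha^{g(n)}$, clears the $O(1)$ many provably-positive denominators, and afterwards only deletes or shrinks terms, so by Lemma~\ref{lemma:form} its output is $\psi(c)=\sum_{i\in\mathcal I}\mu_i c^{\nu_i}\ln^{\xi_i}c$ with $|\mathcal I|=O(1)$ and with $\mu_i,\nu_i$ of polynomial bit-size. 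Step~5 is one call to a real-arithmetic solver on the polynomially-sized system $\alpha^{g(1)}>1\wedge\alpha^{g(n^*)}\le c^*$. All arithmetic is on numbers of polynomially bounded bit-size.

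The only non-routine part is the recursive separability test in Step~4, and bounding its recursion depth is where the real work lies. The binary search for $c^*$ does a fixed number of iterations by hypothesis, so it suffices to bound the depth of the recursion driven by Lemma~\ref{lemma:PosNeg}: at each level one tests $\psi(1)\ge 0$, checks whether $\psi'$ is strictly negative on $[1,\infty)$, and otherwise recurses on $\psi'$ after the rule-(iii) normalization (division by $c^a$ with $a$ the least exponent). I would bound the depth by the substitution $t:=\ln c\ge 0$, under which $\psi$ becomes an exponential polynomial $\widetilde\psi(t)=\sum_{\lambda\in E}(a_\lambda+b_\lambda t)\,e^{\lambda t}$ with at most one affine coefficient per frequency, and normalization forces $\min E=0$. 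Differentiation keeps every nonzero frequency together with the degree ($\le 1$) of its coefficient, while it strictly lowers the degree of the $e^{0\cdot t}$ coefficient (affine $\to$ constant $\to$ absent). Hence one differentiate-and-normalize step strictly decreases the lexicographic measure $\bigl(|E|,\;\deg(\text{coeff.\ of }e^{0\cdot t})\bigr)\in\mathbb N\times\{0,1\}$: either that coefficient drops from affine to constant with $|E|$ unchanged, or it vanishes and $|E|$ drops by one (and we renormalize). Since $|E|\le|\mathcal I|=O(1)$, the recursion bottoms out at a single monomial after $O(1)$ levels, where Lemma~\ref{lemma:PosNeg}(i) --- or an outright ``not separable'' --- is decided immediately; each level is one symbolic differentiation and sign test on an expression of polynomial size. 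Composing the five steps, each polynomial and each invoked $O(1)$ times, gives the claimed polynomial-time bound.
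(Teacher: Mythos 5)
Your proposal is correct and follows the same high-level decomposition as the paper's (two-sentence) proof sketch: all steps except Step~4 are routine size bookkeeping over the rigid grammars, Step~5 is a single solver call, and the crux is bounding the depth of the separability recursion. Where you genuinely diverge is in the termination measure for that recursion. The paper asserts that each level of ``simplification and derivation'' \emph{strictly reduces the number of terms} of the expression being studied; taken literally this is false, because differentiating a term $\mu c^{\nu}\ln c$ with $\nu\neq 0$ produces \emph{two} terms $\mu\nu c^{\nu-1}\ln c+\mu c^{\nu-1}$ --- indeed, on the paper's own running example the count goes $3\to 3$ at the first level ($\psi=5c^{1.5}\ln c-2c^{2.5}+2$ versus $\psi'=7.5c^{0.5}\ln c+5c^{0.5}-5c^{1.5}$). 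Your substitution $t:=\ln c$ and the lexicographic measure $\bigl(|E|,\ \deg(\text{coefficient of }e^{0\cdot t})\bigr)$ repairs exactly this: differentiation preserves each nonzero frequency with its coefficient degree but strictly degrades the frequency-$0$ coefficient (affine $\to$ constant $\to$ absent), and the rule-(iii) renormalization only shifts frequencies, so the measure strictly decreases and the depth is at most $2|\mathcal I|$, which is all that is needed for the polynomial bound. So what the paper's sketch buys is brevity, while your argument buys an actually valid well-founded ordering; the only cosmetic caveat is that $|\mathcal I|$ need not be $O(1)$ in general (the grammar for $\mathfrak t$ permits several sums and $B$ pieces each), but since it is polynomial in the input size and $B$ is fixed by hypothesis, your depth bound and the overall conclusion are unaffected.
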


\emph{Proof Sketch.} Note that each level of recursion in Step 4, i.e.~simplification and derivation, strictly reduces the number of terms in the expression that is being studied. Hence, this step performs linearly many symbolic operations. Moreover, Step~5 can find the optimal value of $\alpha$ in $O(\vert g \vert)$ operations, where $\vert g \vert$ is the length of the expression $g$ (not its value). It is easy to verify that every other step of the algorithm takes polynomial time.

\subsection{Prototype Implementation and Experimental Results} \label{sec:prr-exp}

\paragraph{Implementation} We implemented the algorithm of Section~\ref{sec:autorr} using Python and Mathematica~\cite{mathematica}. We used the SymPy package~\cite{sympy} for symbolic differentiation and integration. All results were obtained on a machine with an Intel Core i7-8700 processor (3.2 GHz) and 16 GB of memory, running on Microsoft Windows 10.

\paragraph{Benchmarks} We experimented with PRRs corresponding to $4$ classical randomized divide-and-conquer algorithms, namely \problem{QuickSelect}, \problem{RandomSearch}, \problem{L1Diameter}, and \problem{L2Diameter}. \problem{QuickSelect} is already described in Section~\ref{sec:rppmodel}. In \problem{RandomSearch}, the input consists of a sorted array $A$ of $n$ distinct items and a key $k$. The goal is to find the index of $k$ in the array, or report that it does not appear. The algorithm randomly chooses an index $i$ and compares $A[i]$ with $k$. If $A[i]$ is larger, it recursively searches the first $i-1$ locations of the array. If they are equal, it terminates and returns $i$. If $k$ is larger, it recursively searches the last $n-i$ locations. In \problem{L1Diameter} and \problem{L2Diameter}, the input is a set $S$ of $n$ points in the 3-d space. The goal is to find the diameter of $S,$ or equivalently, to find two points in $S$ that are farthest apart from each other. The only difference between the two problems is the norm that is used for computing the distance (i.e.~L1 or L2). Chapter~9 of the classical book~\cite{DBLP:books/cu/MotwaniR95} provides randomized divide-and-conquer algorithms for these problems.

\paragraph{PRRs} The PRRs used in our experiments are shown in the table below.
\begin{center}
	\setlength{\extrarowheight}{.5em}
\begin{tabular}{|c|c|}
\hline
 \textbf{Randomized Algorithm} & \textbf{Probabilistic Recurrence Relation} \\[.5em]
 \hline \hline
 \problem{QuickSelect} &  $\textstyle \time(n) = n - 1 + \frac 1 n\left(\sum_{i=\lceil n/2 \rceil}^{n-1} \time(i) + \sum_{i=\lfloor n/2 \rfloor}^{n-1} \time(i)\right)$  \\[.5em]
 \hline
 \problem{RandomSearch} & { $\textstyle \time(n) = 1 + \frac 1 n\left(\sum_{i=\lceil n/2 \rceil}^{n-1} \time(i) + \sum_{i=\lfloor n/2 \rfloor}^{n-1} \time(i)\right)$ }\\[.5em]
 \hline
 \problem{L1Diameter} & { $\textstyle \time(n) = n + \frac 1 n\left(\sum_{i=0}^{n-1} \time(i)\right)$ } \\[.5em]
 \hline
 \problem{L2Diameter} & { $\textstyle \time(n) = n \ln n + \frac 1 n\left(\sum_{i=0}^{n-1} \time(i)\right)$ }\\[.5em]
 \hline
 \end{tabular}
\setlength{\extrarowheight}{0em}
\end{center}

\paragraph{Results} Our experimental results are shown in Tables~\ref{tab:exp1} and~\ref{tab:exp2}. For each recurrence relation, we consider several different tail probabilities, and manually compute and report the concentration bound obtained by the classical method of~\cite{DBLP:journals/jacm/Karp94}. We then provide results of our algorithm using various different functions $f$. Recall that $f(n)$ is an upper-bound for $\expv[\time(n)].$ In cases where our algorithm was unable to find antiderivatives, we also report the parameter $B,$ i.e.~the number of blocks used in over-approximating the summations.

\paragraph{Discussion} As shown in Tables~\ref{tab:exp1} and~\ref{tab:exp2}, our algorithm is very efficient and can handle the benchmarks in a few seconds. Moreover, it obtains concentration bounds that are consistently tighter than those of~\cite{DBLP:journals/jacm/Karp94}, often by one or more orders of magnitude (see the ratio columns). It is also noteworthy that, for the \problem{RandomSearch} benchmark, our algorithm obtains concentration bounds that decrease as $n^*$ goes up, whereas~\cite{DBLP:journals/jacm/Karp94} only provides constant bounds. In this case, the ratio becomes arbitrarily large as $n^*$ grows.

\begin{table}
\caption{Experimental results over the PRRs corresponding to \problem{QuickSelect} and \problem{RandomSearch}}
\setlength{\extrarowheight}{.2em}
\resizebox{\linewidth}{!}{
\begin{tabular}{|c|c||c||c|c|c|c|c|}

\hline
\textbf{Recurrence} & \textbf{Tail Probability} & \textbf{Karp's bound} & $\mathbf{f(n)}$ & \textbf{Our bound} & \makecell{\textbf{Our}\\\textbf{runtime (s)}} & \makecell{\textbf{Ratio of} \\ \textbf{the bounds $\approx$}} & $\mathbf{B}$ \\
\hline \hline
\multirow{15}{*}{\problem{QuickSelect}} &
\multirow{3}{*}{$\Pr[\mathcal T(n^*)\geq 17n^*]$} & \multirow{3}{*}{${\left(\frac34\right)}^{13}\approx 0.024$} & $6n$ & $4.36\cdot 10^{-8}$ & $3.27$ & $5\cdot 10^5$ & \multirow{15}{*}{/}
\\\cline{4-7} & & & $7n$ & $6.11\cdot 10^{-9}$ & $3.24$ & $3.63\cdot 10^6$ &
\\\cline{4-7} & & & $10n$ & $1.86\cdot 10^{-8}$ & $3.34$ & $1.29\cdot 10^6$ &
\\\cline{2-7} & \multirow{3}{*}{$\Pr[\mathcal T(n^*)\geq 15n^*]$} & \multirow{3}{*}{${\left(\frac34\right)}^{11}\approx 0.043$} & $9n$ & $6.89\cdot 10^{-7}$ & $2.32$ & $6.24\cdot 10^4$ &
\\\cline{4-7} & & & $12.5n$ & $0.002$ & $2.18$ & $21.5$ &
\\\cline{4-7} & & & $13n$ & $0.005$ & $2.09$ & $8.6$ &
\\\cline{2-7} & \multirow{3}{*}{$\Pr[\mathcal T(n^*)\geq 11n^*]$} & \multirow{3}{*}{${\left(\frac34\right)}^{7}\approx 0.021$} & $5n$ & $0.0024$ & $2.25$ & $8.75$ &
\\\cline{4-7} & & & $6n$ & $0.0021$ & $2.10$ & $10$ &
\\\cline{4-7} & & & $8n$ & $0.0016$ & $2.15$ & $13.125$ &
\\\cline{2-7} & \multirow{3}{*}{$\Pr[\mathcal T(n^*)\geq 8n^*]$} & \multirow{3}{*}{${\left(\frac34\right)}^{4}\approx 0.317$} & $5.5n$ & $0.039$ & $2.42$ & $8.12$ &
\\\cline{4-7} & & & $6n$ & $0.046$ & $1.82$ & $6.89$ &
\\\cline{4-7} & & & $7n$ & $0.151$ & $2.28$ & $2.10$ &
\\\cline{2-7} & \multirow{3}{*}{$\Pr[\mathcal T(n^*)\geq 6n^*]$} & \multirow{3}{*}{${\left(\frac34\right)}^{2}= 0.5625$} & $4.5n$ & $0.406$ & $2.78$ & $1.39$ &
\\\cline{4-7} & & & $5n$ & $0.365$ & $2.67$ & $1.54$ &
\\\cline{4-7} & & & $5.2n$ & $0.402$ & $3.45$ & $1.39$ &
\\\hline
\multirow{15}{*}{\problem{RandomSearch}} &
\multirow{3}{*}{$\Pr[\mathcal T(n^*)\geq 11\ln n^*]$} & \multirow{3}{*}{${\left(\frac34\right)}^{11-\frac{1}{\ln \frac{4}{3}}}\approx 0.12$} & $5\ln n$ & $(n^*)^{-8.24}$ & $3.12$ & \multirow{15}{*}{\makecell{$+\infty$\\as $n^* \rightarrow \infty$}} & \multirow{15}{*}{/}
\\\cline{4-6} & & & $7\ln n$ & $(n^*)^{-8.11}$ & $3.07$ & &
\\\cline{4-6} & & & $9\ln n$ & $(n^*)^{-6.75}$ & $3.20$ & &
\\\cline{2-6} & \multirow{3}{*}{$\Pr[\mathcal T(n^*)\geq 10\ln n^*]$} & \multirow{3}{*}{${\left(\frac34\right)}^{10-\frac{1}{\ln \frac{4}{3}}}\approx 0.154$} & $7\ln n$ & $(n^*)^{-6.08}$ & $2.28$ & &
\\\cline{4-6} & & & $8.5\ln n$ & $(n^*)^{-3.52}$ & $2.90$ & &
\\\cline{4-6} & & & $9.5\ln n$ & $(n^*)^{-1.26}$ & $2.37$ & &
\\\cline{2-6} & \multirow{3}{*}{$\Pr[\mathcal T(n^*)\geq 8\ln n^*]$} & \multirow{3}{*}{${\left(\frac34\right)}^{8-\frac{1}{\ln \frac{4}{3}}}\approx 0.273$} &
$5.5\ln n$ & $(n^*)^{-3.94}$ & $2.50$ & &
\\\cline{4-6} & & & $6\ln n$ & $(n^*)^{-3.49}$ & $2.37$ & &
\\\cline{4-6} & & & $6.5\ln n$ & $(n^*)^{-2.84}$ & $2.26$ & &
\\\cline{2-6} & \multirow{3}{*}{$\Pr[\mathcal T(n^*)\geq 7\ln n^*]$} & \multirow{3}{*}{${\left(\frac34\right)}^{7-\frac{1}{\ln \frac{4}{3}}}\approx 0.363$} &
$4.5\ln n$ & $(n^*)^{-2.80}$ & $2.29$ & &
\\\cline{4-6} & & & $5.5\ln n$ & $(n^*)^{-2.36}$ & $2.47$ & &
\\\cline{4-6} & & & $6\ln n$ & $(n^*)^{-1.74}$ & $2.40$ & &
\\\cline{2-6} & \multirow{3}{*}{$\Pr[\mathcal T(n^*)\geq 5\ln n^*]$} & \multirow{3}{*}{${\left(\frac34\right)}^{5-\frac{1}{\ln \frac{4}{3}}}\approx 0.645$} & $3.7\ln n$ & $(n^*)^{-0.68}$ & $2.37$ & &
\\\cline{4-6} & & & $4\ln n$ & $(n^*)^{-0.78}$ & $2.61$ & &
\\\cline{4-6} & & & $4.5\ln n$ & $(n^*)^{-0.56}$ & $2.29$ & &
\\\hline
\end{tabular}
}
\setlength{\extrarowheight}{0em}

\label{fig:expresult51}
\label{tab:exp1}

\end{table}

\begin{table}
\caption{Experimental results over the PRRs corresponding to \problem{L1Diameter} and \problem{L2Diameter}}
\setlength{\extrarowheight}{.2em}
\resizebox{\linewidth}{!}{
\begin{tabular}{|c|c||c||c|c|c|c|c|}
\hline
\textbf{Recurrence} & \textbf{Tail Probability} & \textbf{Karp's bound} & $\mathbf{f(n)}$ & \textbf{Our bound} & \makecell{\textbf{Our}\\\textbf{runtime (s)}} & \makecell{\textbf{Ratio of} \\ \textbf{the bounds $\approx$}} &  $\mathbf{B}$ \\
\hline \hline
\multirow{15}{*}{\problem{L1Diameter}} &
\multirow{3}{*}{$\Pr[\mathcal T(n^*)\geq 13n^*]$} & \multirow{3}{*}{${\left(\frac12\right)}^{11}\approx 4.89\cdot 10^{-4}$} & $4.3n$ & $2.33\cdot 10^{-9}$ & $3.76$ & $2.09\cdot 10^{5}$ & \multirow{15}{*}{/}
\\\cline{4-7} & & &  $5n$ & $1.47\cdot 10^{-9}$ & $2.97$ & $3.32\cdot 10^5$ &
\\\cline{4-7} & & & $5.2n$ & $1.48 \cdot 10^{-9}$  & $3.34$ & $3.34\cdot 10^5$ &
\\\cline{2-7} & \multirow{3}{*}{$\Pr[\mathcal T(n^*)\geq 11n^*]$} & \multirow{3}{*}{${\left(\frac12\right)}^{9}= 0.002$} & $2.5n$ & $1.891\cdot 10^{-4}$ & $2.09$ &$10.58$ &
\\\cline{4-7} & & & $6n$ & $1.317\cdot 10^{-6}$ & $1.80$ & $1518.61$ &
\\\cline{4-7} & & & $7n$ & $1.976\cdot 10^{-5}$ & $1.82$ & $101.21$ &
\\\cline{2-7} & \multirow{3}{*}{$\Pr[\mathcal T(n^*)\geq 9n^*]$} & \multirow{3}{*}{${\left(\frac12\right)}^{7}= 0.008$} & $2.5n$ & $0.002$ & $2.33$ & $4$ &
\\\cline{4-7} & & & $5.5n$ & $7.957\cdot 10^{-5}$ & $2.27$ & $25.14$ &
\\\cline{4-7} & & & $6n$ & $2.963\cdot 10^{-4}$ & $2.00$ & $6.75$ &
\\\cline{2-7} & \multirow{3}{*}{$\Pr[\mathcal T(n^*)\geq 7n^*]$} & \multirow{3}{*}{${\left(\frac12\right)}^{5}= 0.032$} & $3.5n$ & $0.002$ & $2.07$ & $16$ &
\\\cline{4-7} & & & $5n$ & $0.007$ & $2.07$ & $4.571$ &
\\\cline{4-7} & & & $5.5n$ & $0.018$ & $2.50$ & $1.778$ &
\\\cline{2-7} & \multirow{3}{*}{$\Pr[\mathcal T(n^*)\geq 5n^*]$} & \multirow{3}{*}{${\left(\frac12\right)}^{3}= 0.125$} & $2.5n$ & $0.081$ & $2.68$ & $1.54$ &
\\\cline{4-7} & & & $3n$ & $0.046$ & $2.78$ & $2.717$ &
\\\cline{4-7} & & & $4n$ & $0.117$ & $2.56$ & $1.068$ &
\\\hline
\multirow{10}{*}{\problem{L2Diameter}} &
\multirow{2}{*}{$\Pr[\mathcal T(n^*)\geq 20n^*\ln n^*]$} & \multirow{2}{*}{${\left(\frac12\right)}^{18}\approx 3.81\cdot 10^{-6}$} & $3.5n\ln n$ & $2.07\cdot 10^{-6}$ & $4.90$ & $1.84$ & $2$
\\\cline{4-8} & & & $5n\ln n$ & $5.51\cdot 10^{-10}$ & $15.42$ & $6914.7$ & $4$
\\\cline{2-8} & \multirow{2}{*}{$\Pr[\mathcal T(n^*)\geq 15n^*\ln n^*]$} & \multirow{2}{*}{${\left(\frac12\right)}^{13}\approx 1.23\cdot 10^{-4}$} &
$5n\ln n$ & $6.715\cdot 10^{-7}$ & $12.41$ & $567.38$ & $4$
\\\cline{4-8} & & & $7n\ln n$ & $1.41\cdot 10^{-5}$ & $14.21$ & $27.02$ & $4$
\\\cline{2-8} & \multirow{2}{*}{$\Pr[\mathcal T(n^*)\geq 13.5n^*\ln n^*]$} & \multirow{2}{*}{${\left(\frac12\right)}^{11.5}\approx 3.45\cdot 10^{-4}$} &
$2.5n\ln n$ & $7.22\cdot 10^{-5}$ & $7.51$ & $5.27$ & $2$
\\\cline{4-8} & & & $5n\ln n$ & $5.67\cdot 10^{-6}$ & $14.75$ & $67.19$ & $4$
\\\cline{2-8} & \multirow{2}{*}{$\Pr[\mathcal T(n^*)\geq 9n^*\ln n^*]$} & \multirow{2}{*}{${\left(\frac12\right)}^{7}=0.008$} &
$2.5n\ln n$ & $0.004$ & $5.64$ & $2$ & $2$
\\\cline{4-8} & & & $4.5n\ln n$ & $0.001$ & $12.44$ & $8$ & $4$
\\\cline{2-8} & \multirow{2}{*}{$\Pr[\mathcal T(n^*)\geq 8n^*\ln n^*]$} & \multirow{2}{*}{${\left(\frac12\right)}^{6}\approx 0.016$} &
$2.5n\ln n$ & $0.009$ & $6.14$ & $1.79$ & $2$
\\\cline{4-8} & & & $4.5n\ln n$ & $0.007$ & $15.13$ & $2.29$ & $4$
\\ \hline
\end{tabular}
}
\setlength{\extrarowheight}{0em}
\label{fig:expresult52}
\label{tab:exp2}

\end{table}

\section{Related Works}

\paragraph{Previous results on concentration bounds for probabilistic programs}
Concentration bounds for probabilistic programs were first considered by~\cite{DBLP:conf/sas/Monniaux01}
where a basic approach for obtaining exponentially-decreasing concentration bounds through abstract interpretation and
truncation of the sampling intervals is presented.
The work of~\cite{SriramCAV} used Azuma's inequality to derive exponentially-decreasing concentration results
for values of program variables in a probabilistic program.
For termination time of probabilistic programs exponentially-decreasing concentration bounds for special
classes of probabilistic programs using Azuma and Hoeffding inequalities were established in~\cite{DBLP:journals/toplas/ChatterjeeFNH18}.
Recently, for several cases where the Azuma and Hoeffding inequalities are not applicable,
a reciprocal concentration bound using Markov's inequality was presented in~\cite{DBLP:journals/corr/ChatterjeeF17},
which was then extended to higher moments in~\cite{DBLP:conf/tacas/KuraUH19,DBLP:journals/corr/abs-2001-10150}.
For a detailed survey of the current methods of concentration bounds for probabilistic programs see~\cite{handbookchapter}.

\paragraph{Previous results on concentration bounds for probabilistic recurrences}
Concentration-bound analyses for probabilistic recurrence relations were first considered in the classical work of~\cite{DBLP:journals/jacm/Karp94},
where cookbook methods (similar to the master theorem for  worst-case analysis of recurrences) were obtained for a large class of probabilistic recurrence relations.
A variant of the results of Karp that weakened several conditions but obtained comparable bounds was presented in~\cite{DBLP:journals/tcs/ChaudhuriD97}.
The optimal concentration bound for the \problem{QuickSort} algorithm was presented in~\cite{DBLP:journals/jal/McDiarmidH96}.
Recently, concentration bounds for probabilistic recurrence relations were mechanized in a theorem prover~\cite{DBLP:conf/itp/Tassarotti018},
and extended to parallel settings~\cite{DBLP:journals/corr/Tassarotti17}.

\paragraph{Comparison with previous approaches on probabilistic programs}
Compared with previous results on probabilistic programs, our approach considers synthesis of exponential supermartingales.
As compared to~\cite{DBLP:conf/sas/Monniaux01}, our approach is based on the well-studied theory of martingales for stochastic
processes.
In comparison to previous martingale-based approaches, we either achieve asymptotically better bounds
(e.g.~we achieve exponentially-decreasing bounds as compared to polynomially-decreasing bounds of~\cite{DBLP:journals/corr/ChatterjeeF17,DBLP:conf/tacas/KuraUH19,DBLP:journals/corr/abs-2001-10150})
or substantially improve the bounds (e.g.~in comparison with~\cite{DBLP:journals/toplas/ChatterjeeFNH18} (see our experimental results in Section~\ref{result}).
Moreover, all previous results, such as~\cite{SriramCAV,DBLP:journals/toplas/ChatterjeeFNH18}, that achieve exponentially-decreasing bounds require bounded-difference, i.e.~the stepwise difference in a supermartingale needs to be globally bounded to apply Azuma or Hoeffding inequalities. In contrast, our results can apply to stochastic processes that are not necessarily difference-bounded. 

\paragraph{Comparison with previous approaches on probabilistic recurrences}
Compared with previous results on probabilistic recurrence relations, our result is based on
the idea of exponential supermartingales and related automation techniques.
It can derive much better concentration bounds over the classical approach of~\cite{DBLP:journals/jacm/Karp94} (See Remarks~\ref{rem:karp1},\ref{rem:karp2}, and \ref{rem:karp3} in Section~\ref{sec:case}), and beats the \yican{manual} approach of~\cite{DBLP:journals/corr/Tassarotti17} in constants(See Remark \ref{rem:joseph}).
Moreover, our approach also derives a tail bound for \problem{QuickSort} that is comparable to the optimal bound proposed in~\cite{DBLP:journals/jal/McDiarmidH96}
(see Remark~\ref{rem:mcd} in Section~\ref{sec:case}).
In addition, the result of~\cite{DBLP:journals/jacm/Karp94} requires the key condition $\sum_{i}\mathbb{E}(T(h_i(n)))\le \mathbb{E}(T(n))$
to handle recurrences with multiple procedure calls.
Whether this condition can be relaxed is raised as an important open problem in~\cite{DBLP:journals/jacm/Karp94,DBLP:books/daglib/0025902}.
To address this issue, the approach of~\cite{DBLP:journals/corr/Tassarotti17} proposed a method that does not require this restriction. Our method also does not need this restriction, hence could be viewed as a new way to resolve this problem.

\paragraph{Key conceptual difference}
Finally, our general approach of exponential supermartingales has a key difference with respect to previous approaches.
The main conceptual difference is that our approach examines the detailed probability distribution (as moment generating function).
In comparison, previous approaches (e.g.~those based Azuma and Hoeffding inequalities, or results of~\cite{DBLP:journals/jacm/Karp94})
only consider the expectation, the variance, or the range of related random variables.
This is the key conceptual reason that our approach can derive tighter bounds.

\section{Conclusion and Future Work}

In this work, we presented a new approach to derive tighter concentration bounds for probabilistic programs and probabilistic recurrence relations.
We showed that our new approach can derive tighter concentration bounds than the classical methods of applying 
Azuma's inequality for probabilistic programs and the classical approaches for probabilistic recurrences,
even for basic randomized algorithms such as \problem{QuickSelect} and randomized diameter computation. \yican{On} \problem{QuickSort}, \yican{we beat the approach of ~\cite{DBLP:journals/jacm/Karp94} and~\cite{DBLP:journals/corr/Tassarotti17}, and derive a bound comparable to the optimal bound in ~\cite{DBLP:journals/jal/McDiarmidH96}}. 

There are several interesting directions for future work.
First, while we consider classical probability distributions such as uniform and Bernoulli for 
algorithmic aspects, extending the algorithms to handle various other distributions is an interesting 
direction.
Second, whether our technique can be used for the relaxation of the key condition $\sum_{i}\mathbb{E}(T(h_i(n)))\le \mathbb{E}(T(n))$ in the approach of~\cite{DBLP:journals/jacm/Karp94} 
is another interesting problem.
Finally, whether our approach can be directly applied to analyze randomized algorithms, rather than their corresponding probabilistic recurrences, is another direction of future work.

\section{Acknowledgements}

We sincerely thank \yican{Prof.} Joseph Tassarotti for pointing out \yican{a miscalculation} in Remark \ref{rem:mcd}.


\appendix

\section{Detailed Experimental Results for Section~\ref{result} }

In this section, there are some figures of related example in section ~\ref{result} in Figure \ref{figure:ddd}. The numbers on the x-axis are the terminating time  $n $ varying from $6X_0$ to $22X_0$.
\begin{figure}[!h] 
	\begin{minipage}[t]{0.4\linewidth}
		\centering 
		\includegraphics[width=1\textwidth]{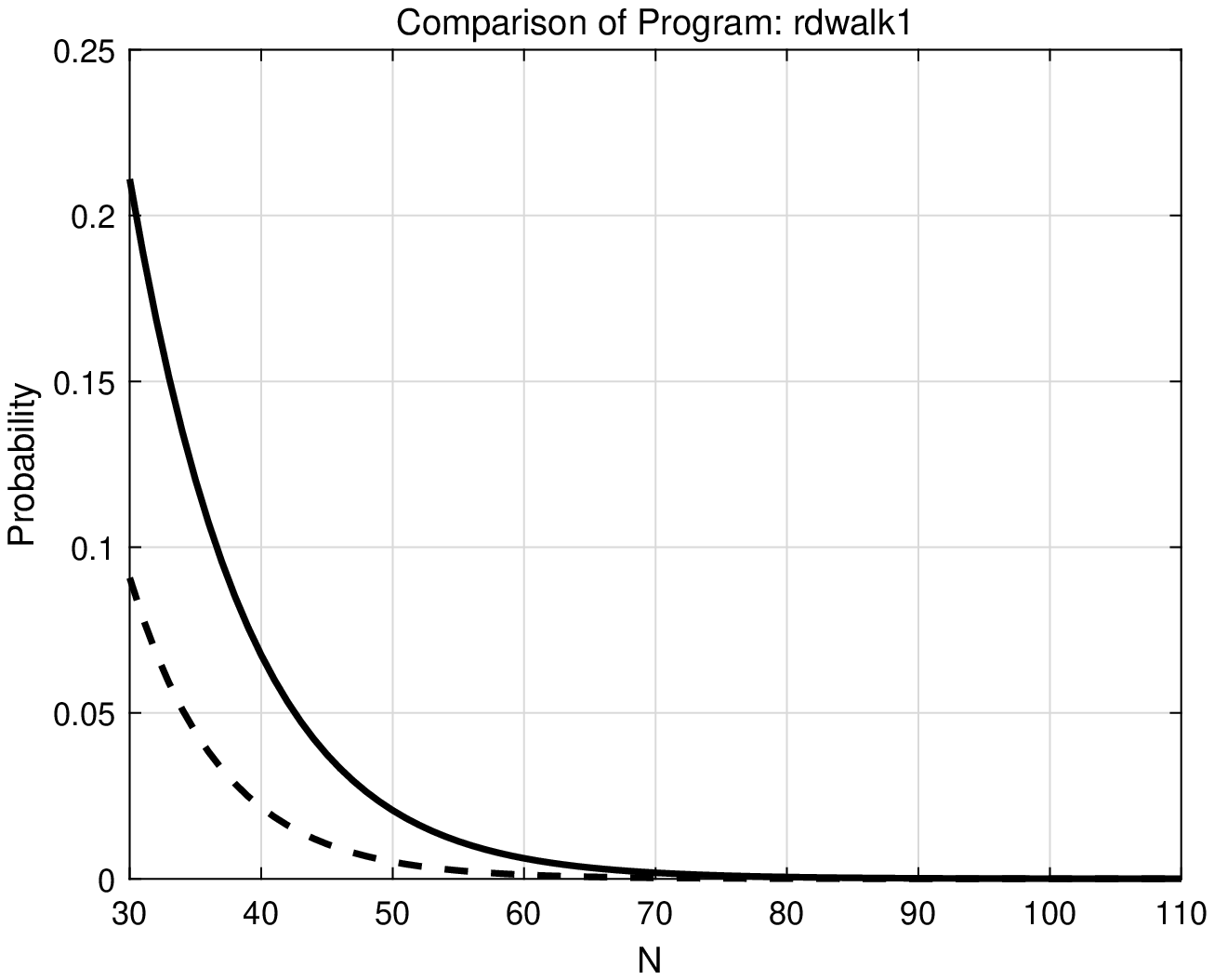} 
		\caption*{(a) This is the result of example rdwalk1.}
		\label{fig:side:a} 
	\end{minipage} 
	\begin{minipage}[t]{0.4\linewidth}
		\centering 
		\includegraphics[width=1\textwidth]{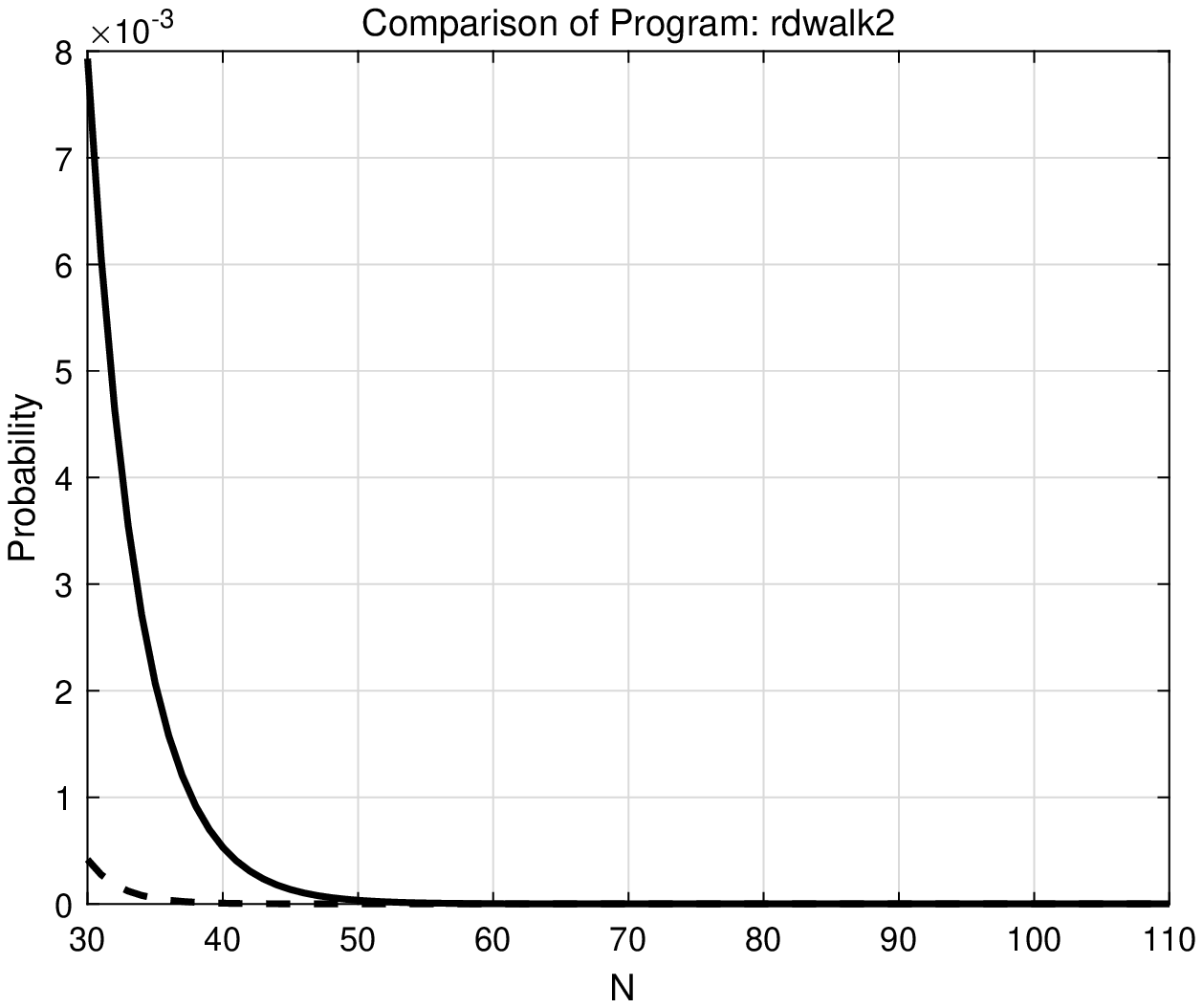} 
		\caption*{(b)This is the result of example rdwalk2.} 
		\label{fig:side:b} 
	\end{minipage}
		\begin{minipage}[t]{0.4\linewidth}
		\centering 
		\includegraphics[width=1\textwidth]{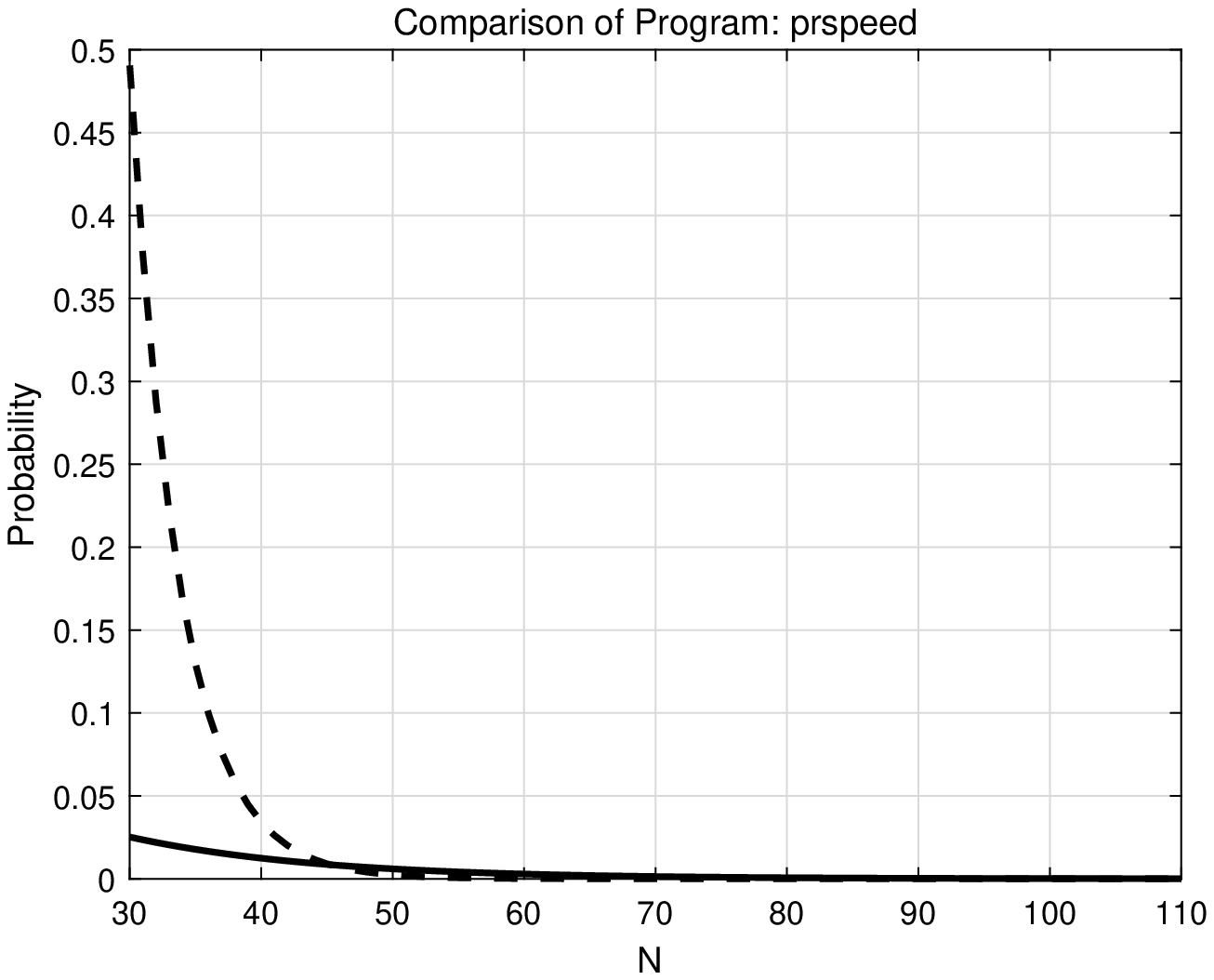} 
		\caption*{(c) This is the result of example prspeed.}
		\label{fig:side:a} 
	\end{minipage} 
	\begin{minipage}[t]{0.4\linewidth}
		\centering 
		\includegraphics[width=1\textwidth]{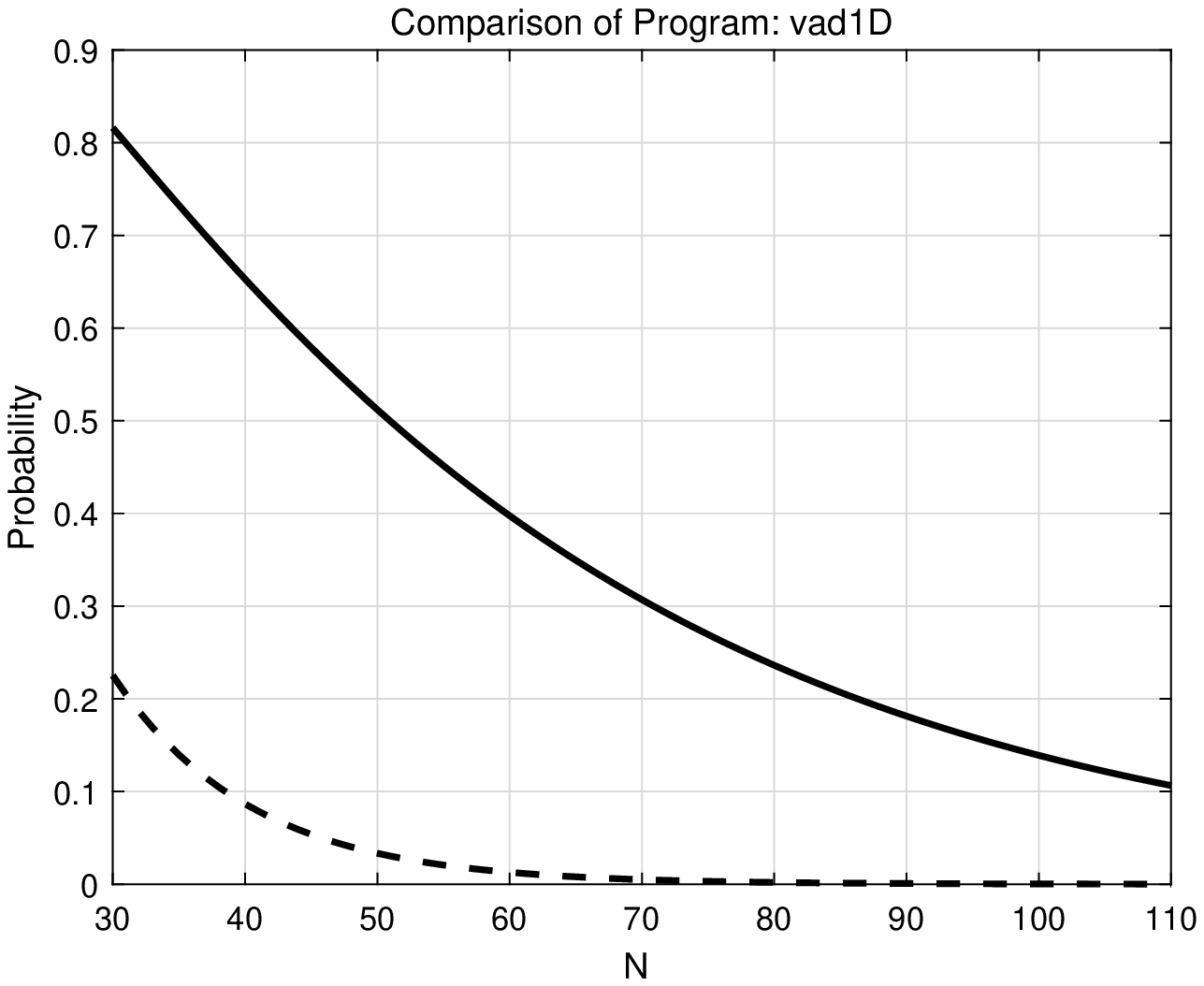} 
		\caption*{(d)This is the result of example vad1D.} 
		\label{fig:side:b} 
	\end{minipage}
		\begin{minipage}[t]{0.4\linewidth}
		\centering 
		\includegraphics[width=1\textwidth]{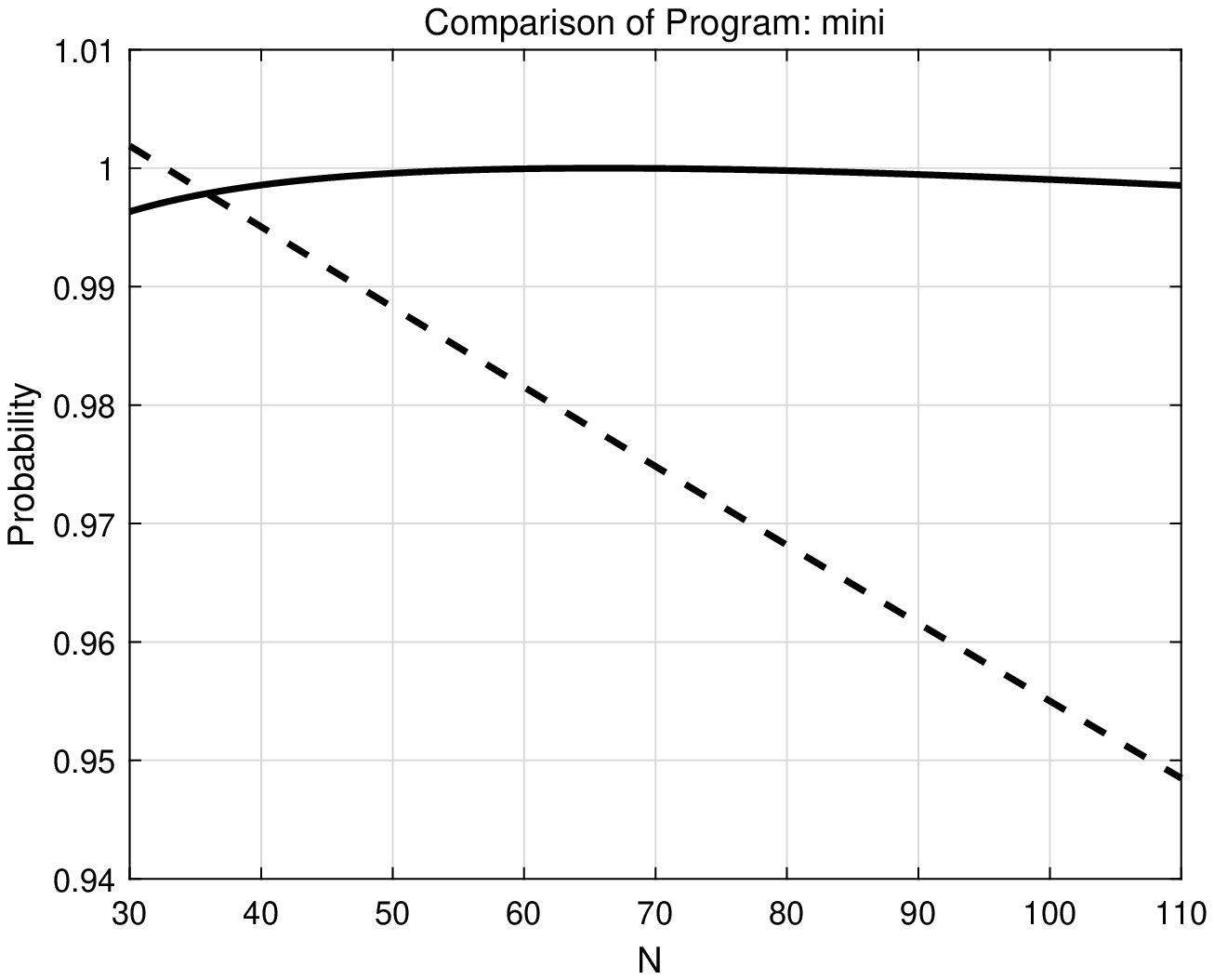} 
		\caption*{(e) This is the result of example mini-roulette.}
		\label{fig:side:a} 
	\end{minipage} 
	\begin{minipage}[t]{0.4\linewidth}
		\centering 
		\includegraphics[width=1\textwidth]{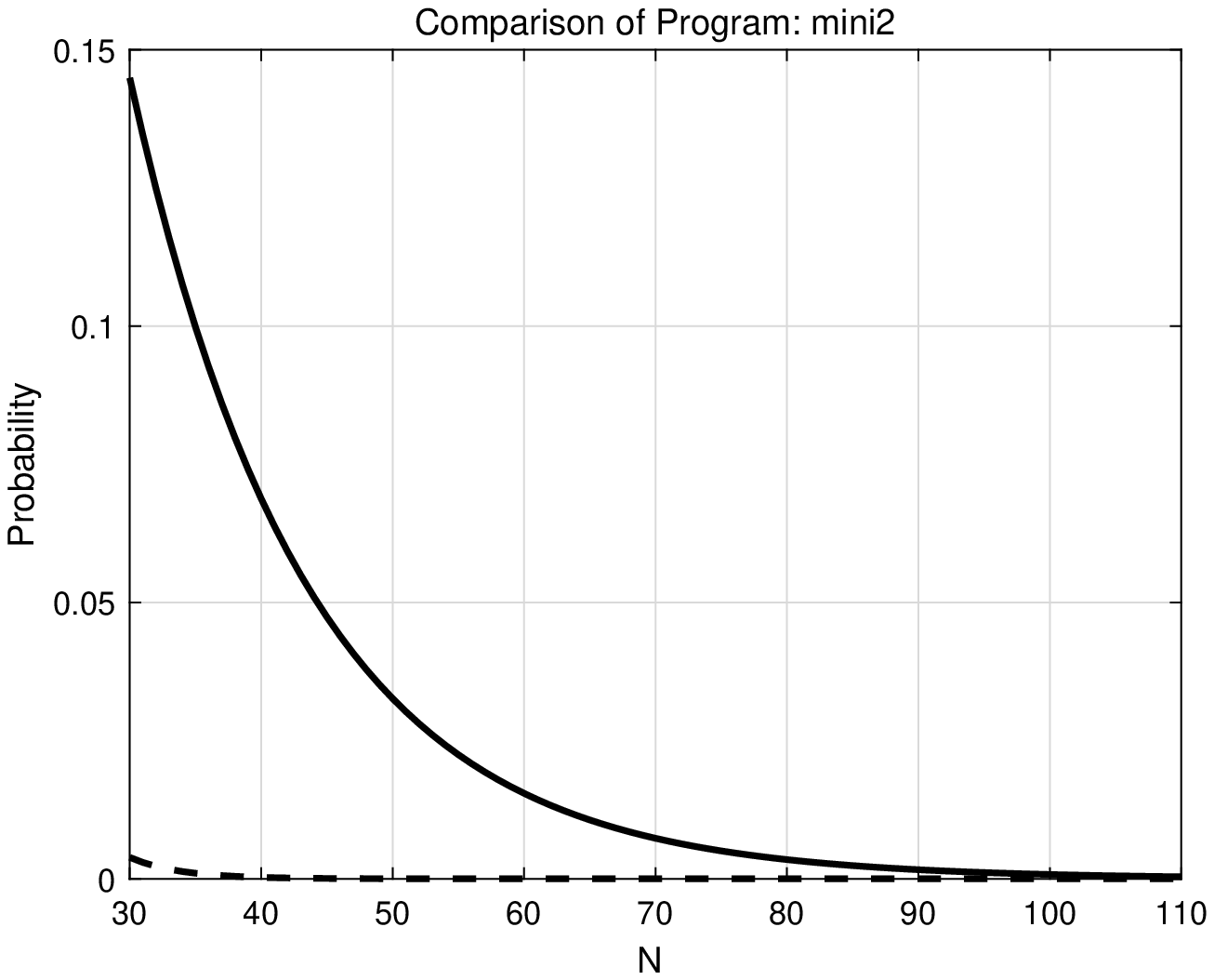} 
		\caption*{(f)This is the result of example mini-roulette2.} 
		\label{fig:side:b} 
	\end{minipage}
	\caption{They are specific figures of experiment results. In these figures, the solid line represents the Hoeffding Bound and the dotted line represents the bound from our approach. }
	\label{figure:ddd} 
\end{figure} 

\section{Detailed Proofs for Section 5}

\subsection{Proof of Lemma \ref{lemma:form}} \label{proof:form}

It is easy to see that this form is closed under addition, multiplication and derivative. 
We first prove the following lemma:
\begin{lemma}\label{lemma:small}
For $0\leq h(n)\leq f(n)$ generated by $\mathfrak{e}$, suppose $g(n)$ is the most significant term (ignoring coefficients), which means $f(n):=q\cdot g(n) + o(g(n))$. Set $c:=\alpha^{g(n)}$, then for every $u\in \mathbb R$, $u\cdot \alpha^{h(n)}$ could be simplified under Step 3's strategy into a univariate function on $c$ as in Lemma \ref{lemma:form}. 

\end{lemma}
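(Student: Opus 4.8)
\emph{Proof plan.} The plan is to reduce everything to the observation that every expression produced by Grammar~\eqref{gramm:2} is a real‑linear combination of the four atoms $1,\ln n,n,n\ln n$. Writing $h(n)=a_0+a_1\ln n+a_2 n+a_3 n\ln n$ and letting $g(n)\in\{1,\ln n,n,n\ln n\}$ be the most significant atom of $f$ (so $f(n)=q\,g(n)+o(g(n))$ with $q>0$ by assumption), I would first argue that the coefficient of $g$ in $h$ is nonnegative and that no atom strictly more significant than $g$ occurs in $h$. This is where the hypothesis $0\le h(n)\le f(n)$ is used: an atom strictly more significant than $g$ appearing in $h$ with positive coefficient would force $h(n)>f(n)$ for large $n$, one with negative coefficient would force $h(n)\to-\infty$, and a negative coefficient on $g$ itself would again force $h(n)\to-\infty$ --- all contradictions (the case $g\equiv 1$ is trivial, since then $h$ is a nonnegative constant). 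Hence $h(n)=b\,g(n)+\sum_{g_j\prec g}b_j\,g_j(n)$ with $b\ge 0$ and each $g_j\prec g$ in the strict order $1\prec\ln n\prec n\prec n\ln n$.

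Next I would perform the change of variable behind Step~3. Since $c=\alpha^{g(n)}$ gives $\ln\alpha=(\ln c)/g(n)$, we get $\alpha^{h(n)}=e^{h(n)\ln\alpha}=c^{\,h(n)/g(n)}=c^{\,b}\cdot\prod_{j}c^{\,b_j\rho_j(n)}$, where $\rho_j(n):=g_j(n)/g(n)$. A finite case analysis shows each $\rho_j$ is one of $\tfrac1{\ln n},\tfrac1n,\tfrac{\ln n}{n},\tfrac1{n\ln n}$: in every case $\rho_j$ is positive, tends to $0$, and is at most $1$ once $n$ exceeds a fixed constant. The finitely many smaller values of $n$ are precisely those the algorithm discharges by direct evaluation, so on the relevant range each factor $c^{\,b_j\rho_j(n)}$ has an exponent of constant sign and absolute value at most $1$, with $n$ occurring in that exponent.

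Finally I would invoke Step~3 of the algorithm itself: it repeatedly rewrites any power of $c$ whose exponent mentions $n$ and is at most $1$ in absolute value into either $c^0=1$ or $c^1=c$ --- the sign test only decides which of the two, and either choice preserves the shape in question. Applying this to each of the finitely many factors $c^{\,b_j\rho_j(n)}$, the term $u\cdot\alpha^{h(n)}$ is rewritten to $u\cdot c^{\,b}\cdot c^{\,m}=u\cdot c^{\,b+m}$ for some integer $m\ge 0$, i.e.~to a single monomial $\mu\,c^{\nu}$ with $\mu=u\in\mathbb R$, $\nu=b+m\in\mathbb R$, and no $\ln c$ factor; this is an instance of form~\eqref{eq:form} (one summand, $\xi_1=0$), which is all that is claimed, and since that form is closed under the sum, product and derivative operations used by the algorithm the statement propagates.

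The hard part is the first step: extracting from the sandwich $0\le h\le f$ alone --- the coefficients in $h$ being a priori arbitrary reals --- the structural facts that $h$'s leading atom is dominated by $g$ and carries a nonnegative coefficient. This rests on $q>0$, on $g(n)\to\infty$, and on a short case analysis over the strict order $1\prec\ln n\prec n\prec n\ln n$. The only other point needing care, that $\rho_j(n)\le 1$ can fail for a few small $n$, is harmless: it is exactly the situation the algorithm already handles separately by evaluating those bounded instances by hand.
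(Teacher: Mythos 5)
Your proof is correct and follows essentially the same route as the paper's: decompose $h$ into the atoms $1,\ln n,n,n\ln n$, write $\alpha^{h(n)}=c^{h(n)/g(n)}$ as a product of powers of $c$, and let Step~3's sign-based rule collapse each factor whose exponent involves $n$ into $1$ or $c$, the form of Lemma~\ref{lemma:form} being closed under multiplication. If anything your write-up is more complete, since you justify two points the paper glosses over --- why $0\le h\le f$ forces $h$'s leading atom to be dominated by $g$ with a nonnegative coefficient, and why the bound $g_j(n)/g(n)\le 1$ can fail only for finitely many small $n$ that the algorithm discharges separately.
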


\begin{proof}
Suppose $h(n):= \sum_{i} \omega_i h_i(n)$, where $h_i(n) \in \{n, \ln n, n\ln n\}$. Then we have:
\begin{align*}
	u\cdot \alpha^{h(n)} &= u \cdot c^{\frac{h(n)}{g(n)}} \\
	&= u\cdot \prod_{i} {c^{\omega_i\frac{h_i(n)}{g(n)}}}
\end{align*}

For every $i$, if $h_i(n) = g(n)$, then this term would become $c^{\omega_i}$, otherwise $\frac{h_i(n)}{g(n)}\leq 1$ since $h(n)\leq f(n)$. Then if $\omega_i\cdot u>0$, it would be simplifed into $1$, otherwise it would be simplified into $c$. Since the form is closed under multiplication, we conclude that $u\cdot \alpha^{h(n)}$ could be simplified into the form of Lemma \ref{lemma:form}. 

\end{proof}

We now return to proving Lemma~\ref{lemma:form}. By design, our PRR always has the following form:
$$\time(n) = a(n) + \frac{\gamma}{n}\cdot \left(\sum_{i = \lceil n/2\rceil}^{n-1} \time(i) + \sum_{i = \lfloor n/2\rfloor}^{n-1} \time(i)\right) + \frac{1-\gamma}{n}\cdot {\sum_{i = 0}^{n-1}\time(i)}$$
where $0\leq \gamma\leq 1$. We consider several cases:

\paragraph{Case (i)} $f(n) := q\ln n$ and $q>0$. Recall that $f(n)$ is a over-approximation of $\expv[\time(n)]$, $a(n)$ would be $k_1\ln n + k_2$. First note that $k_1 = 0$, since if $k_1 > 0$, then $\mathbb E[\time(n)] = \Omega(\ln^2 n) > f(n)$. Since $k_1 = 0$, we have $k_2\geq 0$. Hence, the conditions for $\alpha$ would be:
\[
\alpha^{q\cdot \ln n}\geq \alpha^{k_2} \cdot \left(\frac{\gamma}{n}\cdot \left(\sum_{i = \lceil n/2\rceil}^{n-1} \alpha^{q\cdot \ln i} + \sum_{i = \lfloor n/2\rfloor}^{n-1} \alpha^{q\cdot \ln i}\right) + \frac{1-\gamma}{n}\cdot {\sum_{i = 0}^{n-1}\alpha^{q\cdot \ln i}}\right)
\]

By integration and concavity, we have:
\[
\alpha^{q\cdot \ln n}\geq \alpha^{k_2} \cdot \left(\frac{2\gamma}{n}\cdot \left(\frac{n\alpha^{q\ln n}-(n/2)\alpha^{q\ln (n/2)}}{q\ln \alpha + 1}\right) + \frac{1-\gamma}{n}\cdot { \frac{n \alpha^{q\ln n}}{q\ln \alpha + 1}}\right)
\]

We define $c:=\alpha^{\ln n}$, we have:
\[
c^q\geq \alpha^{k_2} \cdot \left(\frac{2\gamma}{n}\cdot \left(\frac{nc^q-(n/2)(\frac{c^q}{\alpha^{q\ln 2}})}{q\ln \alpha + 1}\right) + \frac{1-\gamma}{n}{\cdot \frac{n c^q}{q\ln \alpha + 1}}\right)
\]

By moving everything to left and eliminating the fraction, we get:
\[
n c^q (1 + q\ln \alpha)- \alpha^{k_2} \cdot {2\gamma}\cdot \left(nc^q-(n/2)\cdot \frac{c^q}{\alpha^{q\ln 2}}\right) + (1-\gamma)\cdot (n \cdot c^q) \geq 0
\]

By removing common parts $n$ and $c^q$, we derive:
\[
(1 + q\ln \alpha)- \alpha^{k_2} \cdot {2\gamma}\cdot \left({1-(1/2)\cdot \frac{1}{\alpha^{q\ln 2}}}\right) + (1-\gamma) \geq 0
\]

Since $k_2,\gamma,q$ are constants, this is a univariate inequality for $\alpha$ in the form of Lemma \ref{lemma:form}.

\paragraph{Case (ii)} $f(n) := qn, q>0$. Recall that $f(n)$ is a over-approximation of $\mathbb E[\time(n)]$. $a(n)$ would be $k_1 n + k_2\ln n + k_3$, where $0<k_1\leq q, k_2,k_3\in \mathbb R$. Conditions for $\alpha$ would be:
\[
\alpha^{q\cdot  n}\geq \alpha^{k_1 \cdot n + k_2\cdot \ln n + k_3} \cdot \left(\frac{\gamma}{n}\cdot \left(\sum_{i = \lceil n/2\rceil}^{n-1} \alpha^{q\cdot  i} + \sum_{i = \lfloor n/2\rfloor}^{n-1} \alpha^{q\cdot  i}\right) + \frac{1-\gamma}{n}\cdot {\sum_{i = 0}^{n-1}\alpha^{q\cdot  i}}\right)
\]

By integration and concavity, we have:
\[
\alpha^{q\cdot  n}\geq \alpha^{k_1 \cdot n + k_2\cdot \ln n + k_3} \cdot \left(\frac{2\gamma}{n}\cdot \left(\frac{\alpha^{q\cdot n}-\alpha^{q\cdot (n/2)}}{q\ln \alpha}\right) + \frac{1-\gamma}{n}\cdot {\frac{\alpha^{q\cdot n} - 1}{q\ln \alpha}}\right)
\]

By moving everything to left and eliminating the fraction, we get:
\[
n\cdot q\cdot \ln\alpha \cdot\  \alpha^{q\cdot  n}- \alpha^{k_1 \cdot n + k_2\cdot \ln n + k_3} \cdot \left({2\gamma}\cdot \left({\alpha^{q\cdot n}-\alpha^{q\cdot (n/2)}}\right) + ({1-\gamma})\cdot {\left({\alpha^{q\cdot n} - 1}\right)}\right) \geq 0
\]

We define $c:=\alpha^{n}$. By Lemma \ref{lemma:small}, $\alpha^{k_1 \cdot n + k_2\cdot \ln n + k_3}$, and since $n\ln\alpha = \ln c$ could be simplified into the form we want, thus the whole formula could be simplified into the form we want.

\paragraph{Case (iii)} Otherwise, by design, $\alpha^{f(n)}$ will have no elementary antiderivative, in this case we would partition the summation uniformly into $B$ parts, and use the maximum element of a part to over-approximate each of its elements. In this case, conditions for $\alpha$ would be:
\[
\alpha^{f(n)}\geq \alpha^{a(n)} \cdot \left(\frac{\gamma}{n}\cdot \left(2\sum_{i = \lceil (n-1)/2\rceil}^{n-1} \alpha^{f(i)}\right) + \frac{1-\gamma}{n}\cdot {\sum_{i = 0}^{n-1}\alpha^{f(i)}}\right)
\]

By uniformly dividing into $B$ parts and using the maximum element of a part to over-approximate each of its elements, we derive:
\[
\alpha^{f(n)}\geq \alpha^{a(n)} \cdot \left(\frac{\gamma}{n}\cdot \left(\frac{2n}{B}\cdot \sum_{i = 1}^{B} {\alpha^{f\left(\frac{n(B+i)}{2B}\right)}}\right) + \frac{1-\gamma}{n}\cdot \frac{n}{B} \cdot {\sum_{i = 1}^{B}\alpha^{f
\left(\frac{in}{B}\right)}}\right)
\]

By moving everything to the left, and removing the denominators, we obtain:
\begin{align*}
B\cdot n\cdot \alpha^{f(n)}- \alpha^{a(n)} \cdot \left(2\cdot \gamma\cdot n\cdot \sum_{i = 1}^{B} \alpha^{f\left(\frac{n(B+i)}{2B}\right)} + n\cdot (1-\gamma)\cdot \sum_{i = 1}^{B}\alpha^{f\left(\frac{in}{B}\right)}\right) \geq 0
\end{align*}

We remove the common term $n$, and let $c:=\alpha^{g(n)}$, where $g(n)$ is the most siginificant term in $f(n)$ (ignoring coefficients). By Lemma \ref{lemma:small}, the whole formula could be simplified into the form we want.

\subsection{Proof of Lemma \ref{lemma:PosNeg}} \label{proof:Posneg}

\paragraph{Proof for Case (i)} Define $\psi(\infty) := \lim_{t\to \infty} \psi(t)\in \mathbb R \cup \{-\infty\}$. Set $c^*:=\sup\{x|x\geq 1\land \psi(x)\geq 0\}$, then by $\forall x>c^*, \psi(x)<0$ and continuity of $\psi$, we have $\psi(c^*)\geq 0$. Since $\psi$ is monotonically decreasing, $\forall x< c^*, \psi(x)> \psi(c^*)\geq 0$, so $c^*$ is feasible. 

\paragraph{Proof for Case (ii)} Since $\psi'(c)$ is separable, there exists $c'$ such that $\psi'(c)>0$ on $(1,c')$ and $\psi'(c)<0$ on $(c',\infty)$.
By Newton-Leibnitz formula: $\psi(c') = \psi(1) + \int_{1}^{c'} \psi'(t) \text{d}t$, so $\psi(c') \geq \psi(1) \geq 0$. Since $\psi'(c) < 0$ on $(c', \infty)$, then by similarly to (i), we conclude that $\psi$ is separable.

\paragraph{Proof for Case (iii)} If $\psi/(c^a)$ is separable for some $a\in \mathbb R$, then there exists $c^*$ such that $\psi(c)/c^a>0$ on $(1,c^*)$ and $\psi(c)/c^a<0$ on $(c^*,\infty)$, since $c^a>0$ for $c>1$, we conclude that $\psi(c) > 0$ on $(1,c^*)$ and $\psi(c)<0$ on $(c^*,\infty)$, so $\psi$ is separable.  
	
\section{Supplementary material for probabilistic recurrences}

\subsection{On the concentration bound for Quicksort}

Here we emphasize use exactly the same step in our paper to derive the following bound for all $k_1,k_2>0$:
$$\Pr[\time(n^*)\geq (9+k_1)\cdot n^*\ln n^* + k_2\cdot n]\leq (2.3)^{-k_1\cdot \ln n^*-k_2}$$

\textbf{The steps below until the sentence "Finally, we could prove.." are exactly the same as our paper in Page 14, without changing any word}.

Given $f(n):= 9 \cdot n \cdot \ln n$, we apply Lemma \ref{lemm:recurrence} and obtain the following conditions for $\alpha$:\footnote{We assume $0\ln 0 := 0$.}
$$\forall 1\leq n\leq n^*,~~~\alpha^{9\cdot n \cdot \ln n}\geq \alpha^{n-1} \cdot \frac{1}{n} \cdot \sum_{i = 0}^{n-1}{\alpha^{f(i)+f(n-1-i)}}$$
For $1 \leq n\leq 8$, we can manually verify that it suffices to set $\alpha > 1$. In the sequel, we assume $n \geq 8.$
Note that $\left(i \cdot \ln i + (n-i-1) \cdot \ln(n-i-1)\right)$ is monotonically decreasing on $[1,\lfloor n/2\rfloor]$, and monotonically increasing on $[\lceil n/2\rceil ,n]$. We partition the summation above uniformly into eight parts and use the maximum of each part to overapproximate the sum. This leads to the following upper bound for this summation:
\begin{align*}
\sum_{i = 0}^{n-1}{\alpha^{f(i)}}  & \leq \sum_{j=0}^7 \sum_{i=\lceil j \cdot n / 8 \rceil}^{\lfloor (j+1)\cdot n / 8 \rfloor} \alpha^{f(i)+f(n-1-i)}
\\
&\leq\frac{n}{4} \cdot \left(\alpha^{9\cdot n \cdot \ln n} + \alpha^{9 \cdot (\frac{n}{8} \cdot \ln \frac{n}{8} + \frac{7 \cdot n}{8} \cdot \ln \frac{7 \cdot n}{8})} + \alpha^{9\cdot (\frac{n}{4} \cdot \ln \frac{n}{4} + \frac{3\cdot n}{4}\ln \frac{3 \cdot n}{4})} + \alpha^{9 \cdot (\frac{5 \cdot n}{8}\ln \frac{5 \cdot n}{8} + \frac{3 \cdot n}{8}\ln \frac{3 \cdot n}{8})}\right)
\end{align*}
Plugging in this overapproximation back into the original inequality, we get:
$$\alpha^{9\cdot n \cdot \ln n}\geq \alpha^{n-1} \cdot \frac{1}{4} \cdot \left(\alpha^{9\cdot n \cdot \ln n} + \alpha^{9 \cdot (\frac{n}{8} \cdot \ln \frac{n}{8} + \frac{7 \cdot n}{8} \cdot \ln \frac{7 \cdot n}{8})} + \alpha^{9\cdot (\frac{n}{4} \cdot \ln \frac{n}{4} + \frac{3\cdot n}{4}\ln \frac{3 \cdot n}{4})} + \alpha^{9 \cdot (\frac{5 \cdot n}{8}\ln \frac{5 \cdot n}{8} + \frac{3 \cdot n}{8}\ln \frac{3 \cdot n}{8})}\right)$$
for all $8\leq n\leq n^*$. We define $c:=\alpha^{n}$ to do substitution, and we use the following formula to do strengthening:
\begin{align*}
\alpha^{\beta \cdot \ln \beta + (n-\beta) \cdot \ln (n-\beta)}&\leq \alpha^{\beta \cdot \ln n + (n-\beta) \cdot \ln (n-\beta)}
\\&= \alpha^{n \cdot \ln n} \cdot \alpha^{-(n-\beta) \cdot \ln n + (n-\beta) \cdot \ln (n-\beta)}
\\&= \alpha^{n \cdot \ln n + (n-\beta) \cdot \ln \frac{n-\beta}{n}} = c \cdot c^{\frac{(n-\beta)}{n \cdot \ln n} \cdot \ln \frac{n-\beta}{n}}
\end{align*}
By defining $\beta = \frac n 8, \frac n 4, \frac {3n} 8$ respectively, we obtain:
\begin{align*}
 \forall 8\leq n\leq n^*,~~ &c^{9\ln n}\geq \frac{c}{n}\cdot \frac{n}{4} \cdot \left(c^{9\ln n} + c^{9\ln n + \frac{63}{8} \cdot \ln \frac 78} + c^{9\ln n + \frac{27}{4} \cdot \ln \frac 34} + c^{9\ln n + \frac{45}{8} \cdot \ln \frac 58}\right) \\
 \forall 8\leq n\leq n^*,~~ &4 - \left(c + c^{1- \frac{63}{8} \cdot \ln \frac 87} + c^{1 - \frac{27}{4 } \cdot \ln \frac 43} + c^{1 - \frac{45}{8} \cdot \ln \frac 85}\right) \geq 0 \\
\end{align*}
Now we study the following function:
$$\psi(c) = 4 - \left(c + c^{1- \frac{63}{8} \cdot \ln \frac 87} + c^{1 - \frac{27}{4 } \cdot \ln \frac 43} + c^{1 - \frac{45}{8} \cdot \ln \frac 85}\right).$$
By basic calculus, we can prove that $\psi(c)\geq 0$ holds on $[1,2.3]$. Additionally, since for every $\alpha$, $\alpha^{n}$ increases as $n$ increases, by plugging $\alpha = 2.3^{1/(n^*)}$ into Lemma~\ref{lemm:recurrence}, we obtain:
$$\Pr[\time(n^*)\geq (9+k_1)\cdot n^*\ln n^* + k_2\cdot n]\leq (2.3)^{-k_1\cdot \ln n^*-k_2}$$

\end{document}